\documentclass[%
 twocolumn,
 amsmath,amssymb,
prx,
longbibliography
]{revtex4-2}
\usepackage[utf8]{inputenc}
\usepackage{amsmath}
\usepackage{amssymb}
\usepackage{amsbsy}
\usepackage{amsthm}
\usepackage{mathtools}
\usepackage{xcolor}
\usepackage{graphicx}
\usepackage{braket}
\usepackage{acronym}
\usepackage{enumitem}
\usepackage[noend]{algpseudocode}
\usepackage{hyperref}
\hypersetup{colorlinks=true, citecolor=blue, urlcolor=blue, linkcolor=blue}

\newacro{AE}{amplitude estimation}
\newacro{QPE}{quantum phase estimation}
\newacro{ML}{maximum likelihood}
\newacro{CRLB}{Cram\'er--Rao lower bound}

\newtheorem{theorem}{Theorem}[section]
\newtheorem{lemma}[theorem]{Lemma}

\newcommand{\Ntot}{N_{\mathrm{tot}}}
\newcounter{algcnt}
\algrenewcommand\algorithmicrequire{\textbf{Input:}}
\algrenewcommand\algorithmicensure{\textbf{Output:}}

\begin{document}

\title{Quantum amplitude estimation beyond power-of-two schedules}

\author{Farrokh Labib}\email{farrokh@unitary.foundation}\affiliation{Unitary Foundation}

\date{\today}

\begin{abstract}
Non-adaptive quantum amplitude estimation (QAE) fixes its Grover depths in advance, so every circuit can run in parallel, but it has so far needed more queries than the best adaptive methods. We show that most of this gap comes from two conventional choices: subspace-based post-processing and power-of-two depth ladders. We replace the first by the exact maximum-likelihood estimate, one matrix multiplication per batch of estimates, and the second by a geometric ladder with ratio $r \approx 1.45$. The result is a fully parallel, deterministic-schedule estimator with total query complexity $2.8$--$3.1/\varepsilon$ at $95\%$ confidence for target errors from $3.5\times 10^{-3}$ to $1\times 10^{-6}$. This matches the average-case complexity of chebAE, the best benchmarked adaptive method, within statistical uncertainty (with the lower point estimate at every scale tested), beats its maximum-observed complexity by $1.6\times$, and needs a maximum sequential depth of only $0.21/\varepsilon$ against chebAE's $2.9/\varepsilon$. Relative to csAE, the best non-adaptive benchmark, the constants improve by $30$--$35\%$ at $95\%$ and $1.5$--$1.7\times$ at $99\%$ confidence. The optimal ratio has a simple origin. Doubling is the fastest depth growth at which the data can still tell neighboring candidate values apart, so power-of-two ladders sit at the edge of confusion and must buy reliability with extra shots; a slightly denser ladder checks every scale redundantly. An error-probability analysis reproduces the measured failure rates and locates the optimum. The likelihood formulation extends directly to noise-aware estimation, and uniformly scaling the capped ladder covers the depth-limited regime, realizing the optimal trade-off $M \cdot N_{\mathrm{tot}} \approx (0.4\text{--}0.6)/\varepsilon^2$ within $\sim 1.1\times$ of the schedule's Cram\'er--Rao limit.
\end{abstract}

\maketitle

\section{Introduction}\label{sec:intro}

\Ac{AE} \cite{brassard2002quantum} estimates the amplitude $a$ of a marked component of a quantum state, $U\ket{0^l} = \sqrt{1-a^2}\ket{x,0} + a\ket{x',1}$, to additive error $\varepsilon$ using $\mathcal{O}(1/\varepsilon)$ applications of $U$, a quadratic improvement over classical sampling that underlies quantum speedups for Monte Carlo integration \cite{montanaro2015quantum} and financial applications \cite{rebentrost2018quantum, Stamatopoulos2020optionpricingusing, Chakrabarti2021thresholdquantum}. Modern variants avoid \ac{QPE} and instead measure the state after $n$ applications of the Grover iterator $G$ at a set of depths $n$, post-processing the binomial outcome statistics classically \cite{suzuki2020amplitude, aaronson2020quantum, grinko2021iterative, venkateswaran2021quantum, giurgica2022low, Rall2023amplitudeestimation}.

These variants divide into two families. \emph{Adaptive} methods---iterative \ac{AE} \cite{grinko2021iterative}, chebAE \cite{Rall2023amplitudeestimation}, and recent Bayesian variants \cite{ramoa2025bayesian, li2026biqae}---choose each depth from the outcomes so far. Among them chebAE achieves the best benchmarked constants, $\sim 3.1/\varepsilon$ average-case total queries at $95\%$ confidence measured under the conventions of Ref.~\cite{labib2024csae}; the Bayesian variants report improvements over iterative \ac{AE} but have not been benchmarked against chebAE \cite{li2026biqae}. Adaptive methods are inherently sequential: their circuits form one long adaptive chain whose total sequential depth is itself $\Theta(1/\varepsilon)$. \emph{Non-adaptive} methods fix the schedule in advance: maximum-likelihood \ac{AE} (MLAE) \cite{suzuki2020amplitude}, which post-processes the counts by maximizing the likelihood, and csAE \cite{labib2024csae}, which maps the problem to direction-of-arrival estimation and post-processes with ESPRIT on a sparse virtual array. With the schedule fixed, every circuit can be dispatched to parallel processors, and the maximum sequential depth on any one processor can be a small fraction of $1/\varepsilon$. The price has been larger constants: csAE reports $\sim 4.3/\varepsilon$ at $95\%$ confidence, conceding the average case to chebAE.

This work closes most of that gap while keeping full parallelism, using two independent observations.

\emph{The estimator.} For a fixed schedule, the statistically optimal post-processing is the exact \ac{ML} estimate on the raw binomial counts. Because the parameter is one-dimensional, the global \ac{ML} estimate is computable \emph{exactly} (to machine precision) by evaluating the log-likelihood on a grid finer than the spacing of its peaks (the likelihood is multimodal; Sec.~\ref{sec:setting}) and refining the best peak. In vectorized form this is one matrix multiplication per batch of estimates. This removes the sign-recovery heuristic, the virtual-array construction, and the eigendecompositions of csAE entirely, and it is cheaper to compute. On csAE's own published schedule and simulated measurement records (paired trials), exact \ac{ML} improves the $95\%$ constant by $\sim 10\%$ and halves the maximum error. Conversely, the csAE heuristic already operates within $\sim 10\%$ of the information-theoretic ceiling of its schedule: the schedule, not the estimator, is the binding constraint.

\emph{The schedule.} Non-adaptive \ac{AE} schedules have conventionally used power-of-two depth ladders. We show that this choice is significantly suboptimal, for a reason that appears not to have been noticed. Doubling is the \emph{maximum} depth growth per rung at which adjacent likelihood basins (the peaks of the multimodal likelihood, one per candidate value the deepest depth cannot distinguish; Sec.~\ref{sec:setting}) remain statistically distinguishable. A power-of-two ladder therefore makes every branch decision marginal, and the resulting basin flips must be suppressed with large shot counts at shallow depths, where shots contribute little Fisher information. A geometric ladder with ratio $r \approx 1.45$ instead resolves each scale redundantly; catastrophic failures, estimates landing many basins from the truth, become rare ($\lesssim 10^{-4}$ at the optimum) without extra shallow shots, and the freed budget migrates to depth. A Chernoff-bound analysis of this aliasing (fringe realignment; Sec.~\ref{sec:ladder}) makes the mechanism quantitative and reproduces the measured failure orderings. The optimum is an interior one: below $r \approx 1.3$ the redundancy itself becomes the dominant cost, through a Fisher-efficiency factor $(r+1)/(r-1)$.

The resulting method is simple: a geometric ladder, a linear shot profile with a single shot at the deepest level (the ``canonical profile'' of Sec.~\ref{sec:ladder}), and one matrix multiplication of classical post-processing. In the standard shot-noise model at amplitudes $a \in (0.1, 0.9)$, it achieves total query constants of $2.8$--$3.1/\varepsilon$ at $95\%$ confidence across target errors from $3.5\times 10^{-3}$ to $1\times 10^{-6}$, with maximum sequential depth $\approx 0.21/\varepsilon$ (Sec.~\ref{sec:numerics}). In a matched head-to-head using the chebAE implementation of Ref.~\cite{Rall2023amplitudeestimation}, this equals or beats chebAE's average-case constants at every scale tested while beating its maximum-observed constants by $1.5$--$1.7\times$; chebAE's sequential depth is $\sim 13\times$ larger. At $99\%$ confidence the constant is $5.1$--$5.8$, matching chebAE's average. A Ziv--Zakai bound (Sec.~\ref{sec:zzb}) shows that the exact \ac{ML} estimate is within $4$--$6\%$ of the best any estimator can achieve on this schedule at $68\%$ and $95\%$ confidence, so the remaining room is in the schedule. Finally, the likelihood formulation makes noise-aware estimation straightforward: a known depolarizing rate enters the model probabilities in one line (Sec.~\ref{sec:noise}).

\section{Setting, metrics, and information}\label{sec:setting}

We use the standard \ac{AE} setting \cite{brassard2002quantum, labib2024csae}. Given $U\ket{0^l} = \cos\theta\ket{x,0} + \sin\theta\ket{x',1}$ with $a = \sin\theta$, $\theta \in (0, \pi/2)$, the Grover iterator $G$ yields, after $n$ applications, a state whose final-qubit measurement returns $0$ with probability
\begin{equation}\label{eq:model}
    p_n(\theta) = \cos^2\!\big((2n+1)\theta\big).
\end{equation}
A \emph{schedule} is a set of depths $D = \{n_1 < \dots < n_L\}$ (the \emph{rungs} of the ladder; we always include $n_1 = 0$, and the deepest depth is written $n_L$ or, in tables and figures, $n_{\max}$) with shot counts $\{N_j\}$; the data are independent binomial counts $k_j \sim \mathrm{Bin}(N_j, p_{n_j}(\theta))$. Following Ref.~\cite{labib2024csae}, the total query count is $\Ntot = \sum_j N_j n_j + N_1$: each application of $G$ counts as one query, and the depth-$0$ circuits, which contain no $G$, count their state preparation $U$ instead, which gives the $N_1$ term. We quote performance as the constant $C_\delta = \varepsilon_\delta \cdot \Ntot$, where $\varepsilon_\delta$ is the $\delta$th percentile of $|\hat a - a|$ over amplitudes drawn uniformly from $(0.1, 0.9)$. The \emph{parallel} constant is $\varepsilon_\delta \cdot n_L$: since the schedule is fixed in advance, all circuits can be distributed over processors, and $n_L$ is the unavoidable sequential depth of the deepest one.

The information content of a schedule has a simple closed form.

\begin{lemma}\label{lem:fisher}
The Fisher information about $\theta$ carried by $N$ shots at depth $n$ is
\begin{equation}
    I_n \;=\; N\,\frac{(\partial_\theta p_n)^2}{p_n(1-p_n)} \;=\; 4N(2n+1)^2,
\end{equation}
independent of $\theta$.
\end{lemma}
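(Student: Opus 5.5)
The plan is to compute the Fisher information of a single Bernoulli trial and then use additivity over independent shots. For one shot with success probability $p=p_n(\theta)$, the log-likelihood is $x\log p + (1-x)\log(1-p)$. Its score is $\partial_\theta p\,\big(x/p - (1-x)/(1-p)\big)$. Taking the variance under $x\sim\mathrm{Bern}(p)$ gives $(\partial_\theta p)^2/(p(1-p))$. The $N$ shots at a given depth are independent, and Fisher information adds over independent observations, so this yields the first equality $I_n = N(\partial_\theta p_n)^2/(p_n(1-p_n))$. Equivalently, one can use the binomial log-likelihood for $k\sim\mathrm{Bin}(N,p)$ directly, since $k$ is sufficient and its score variance is $N$ times the single-shot value.

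For the second equality, set $m=2n+1$ and $\phi=m\theta$. Then $p_n=\cos^2\phi$ and $1-p_n=\sin^2\phi$. Differentiating gives $\partial_\theta p_n = -2m\cos\phi\sin\phi$, so
\begin{equation*}
\frac{(\partial_\theta p_n)^2}{p_n(1-p_n)} = \frac{4m^2\cos^2\phi\,\sin^2\phi}{\cos^2\phi\,\sin^2\phi} = 4m^2 .
\end{equation*}
Substituting back gives $4N(2n+1)^2$, which is independent of $\theta$.

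The only real obstacle is the degenerate set where $\cos\phi\sin\phi=0$, i.e.\ $(2n+1)\theta\in\tfrac{\pi}{2}\mathbb{Z}$. There $p_n\in\{0,1\}$ and the expression is a $0/0$ form. I would handle this by noting that it is a finite set of $\theta$ in $(0,\pi/2)$. Away from it the formula holds identically, so the Fisher information extends continuously to the constant $4Nm^2$. Alternatively, one can note that the cancellation is exact as rational functions of $\cos\phi,\sin\phi$, so the formula is understood in that limiting sense. I would state this caveat briefly rather than treat it as a substantive difficulty.
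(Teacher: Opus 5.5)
Your proof is correct and follows the paper's argument: the paper likewise substitutes $\phi=(2n+1)\theta$ and notes that $(\partial_\theta p_n)^2=(2n+1)^2\sin^2 2\phi$ and $p_n(1-p_n)=\tfrac14\sin^2 2\phi$ share the factor $\sin^2 2\phi$, which is your $4\cos^2\phi\sin^2\phi$ cancellation written with the double-angle identity. Your derivation of the single-shot Bernoulli Fisher information with additivity over shots, and your remark on the finitely many angles where $p_n\in\{0,1\}$ (which the paper's cancellation tacitly excludes), are sound additions that the paper leaves implicit.
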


\begin{proof}
With $\phi = (2n+1)\theta$: $\partial_\theta p_n = -(2n+1)\sin 2\phi$ and $p_n(1-p_n) = \cos^2\!\phi\,\sin^2\!\phi = \tfrac14 \sin^2 2\phi$; the $\sin^2 2\phi$ factors cancel.
\end{proof}

Two consequences shape everything that follows. First, schedule design decouples from the unknown value: there are no ``bad fringes,'' and no need to steer measurements onto slopes, adaptively or otherwise. Second, the \ac{CRLB} $\sigma_\theta \ge (\sum_j 4N_j(2n_j+1)^2)^{-1/2}$ implies that information per query grows linearly with depth ($I_n/(Nn) \approx 16n$), so an unconstrained optimizer would push all shots to the deepest level. What prevents this is not local information but \emph{global identifiability}: the likelihood of Eq.~\eqref{eq:model} is multimodal, with basins of width $\pi/(2(2n_L+1))$ set by the deepest level, and shallow levels exist solely to reject the rival basins. Quantifying that rejection is the subject of Sec.~\ref{sec:ladder}.

Converting the bound into the amplitude metric needs care: $|\hat a - a| \simeq \cos\theta\,|\hat\theta - \theta|$ is a scale \emph{mixture} of half-normals, as $\theta$ is itself drawn from the prior, so the percentile must be taken after the mixing. It is $\kappa_\delta \sigma_\theta$ with $\mathbb{E}_a[2\Phi(\kappa_\delta/\cos\theta) - 1] = \delta$, giving $\kappa_{68} = 0.802$, $\kappa_{95} = 1.669$, $\kappa_{99} = 2.258$ at $a \sim \mathcal{U}(0.1, 0.9)$; the natural-looking $z_\delta\langle\cos\theta\rangle$, with $z_\delta$ the standard-normal quantile, averages the scale first and understates the floor by $3.7\%$ at $\delta = 95$.

The \ac{CRLB} is a local bound, and it cannot be attained once rival basins survive with non-negligible probability. For our flagship schedule (the $r = 1.45$ ladder plus one extra rung near the cap, defined in Sec.~\ref{sec:numerics}) it gives $C_{95} \approx 2.4$, and the measured constants of the unscaled schedule sit $\approx 1.23\times$ above it. Two effects account for that excess. First, rival basins occasionally survive, and the bound ignores them. Second, at single-digit shot counts the likelihood within a basin is noticeably skewed, and the \ac{ML} rule of reporting its peak loses a little precision relative to reporting its center of mass. Heavy tails play no role: at the $95$th percentile the within-basin error distribution is Gaussian to within a few percent, so the excess is extra variance, not a change of shape. When all shot counts are multiplied by a common factor at a fixed depth cap (the uniform scaling of Sec.~\ref{sec:depth}), the ratio falls to $1.02$--$1.11$.

\section{Exact global maximum likelihood}\label{sec:ml}

For a fixed schedule the log-likelihood is
\begin{equation}\label{eq:loglik}
    \ell(\theta) = \sum_{j} \Big[ k_j \ln p_{n_j}(\theta) + (N_j - k_j)\ln\big(1 - p_{n_j}(\theta)\big) \Big],
\end{equation}
and we define $\hat\theta$ as its global maximizer on $(0, \pi/2)$, with $\hat a = \sin\hat\theta$. The likelihood $e^{\ell}$ is an even, $\pi$-periodic trigonometric polynomial in $\theta$. Its modes have Gaussian standard deviation $\sigma_\theta = (\sum_j 4N_j(2n_j+1)^2)^{-1/2}$, which for the ladders used here is between a sixth and an eighth of the deepest-level basin width $\pi/(2(2n_L+1))$; this ratio depends on the shape of the ladder alone. A uniform grid at one eighth of a basin therefore samples every mode near its peak. A two-stage local refinement (a zoom grid across the best coarse step, followed by quadratic interpolation) then recovers $\hat\theta$ to far below the statistical error. The cost is a single $(\text{trials} \times \text{grid})$ matrix product, $\mathcal{O}(L\, n_L)$ arithmetic: $\approx 5\times 10^{8}$ floating-point operations and $\approx 10$~ms per estimate for the deepest schedule used here ($n_{\max} \approx 2.1\times 10^5$). That is far below the classical cost of the ESPRIT pipeline it replaces, and negligible against the quantum cost. Appendix~\ref{app:ml} gives the mode-width identity, the grid-spacing rule that keeps the search exact when all shot counts are scaled up (Sec.~\ref{sec:depth}), and the cost accounting. It also checks the search numerically: the rare trials on which a $32\times$ finer search disagrees are statistical near-ties, and the error percentiles are unchanged.

This estimator is MLAE \cite{suzuki2020amplitude} implemented with an exhaustive global search. We emphasize this because MLAE is often reported as substantially inferior to later methods, at $\approx 10/\varepsilon$ against $\approx 4.3/\varepsilon$ for chebAE in the benchmark of Ref.~\cite{Rall2023amplitudeestimation}. Those benchmarks run MLAE with $100$ shots on every rung of a power-of-two ladder, a schedule whose own \ac{CRLB} constant is $C_{95} \approx 7$; the reported gap is a property of that shot-padded schedule, not of likelihood post-processing. On identical schedules the exact global maximizer is at least as good as any other post-processing, as the following comparison shows. In particular, we ran exact \ac{ML} and the full csAE pipeline on identical measurement records for the published csAE schedule (powers of two to depth $128$, $2080$ queries, $500$ paired simulated trials reproducing the csAE repository data trial-for-trial). Exact \ac{ML} gives $C_{95} = 3.92$ against $4.37$, and maximum error $1.6\times 10^{-2}$ against $3.4\times 10^{-2}$. This has two consequences. The sign-recovery and ESPRIT steps of Ref.~\cite{labib2024csae} can be dropped, for a $\sim 10\%$ gain. Conversely, those steps were already extracting $\sim 90\%$ of the information their schedule offers, so larger improvements must come from the schedule.

\begin{figure}[t]
\refstepcounter{algcnt}\label{alg:method}
\vspace{2pt}\hrule\vspace{3pt}
\noindent\textbf{Algorithm \thealgcnt.} Non-adaptive amplitude estimation.
\vspace{3pt}\hrule\vspace{4pt}
\begin{algorithmic}[1]
\Require maximum Grover depth $n_{\max}$; ladder ratio $r = 1.45$
\Statex \hspace{-1.4em}\emph{Schedule} --- fixed in advance, uses no data
\State $D \gets \{0\} \cup \{\,\mathrm{round}(r^{\,j}) \le n_{\max}\,\}$, distinct
\State $D \gets D \cup \{\mathrm{round}(\max(D)/1.3)\}$ \Comment{breaks the cap's mirror symmetry, Eq.~\eqref{eq:extrema}}
\State sort $D = \{n_1 < \dots < n_L\}$; \; $N_j \gets L-j+1$
\Statex \hspace{-1.4em}\emph{Data} --- all $L$ circuits may run in parallel
\State \textbf{for each} $j$: measure $G^{n_j}U\ket{0^l}$ in the $Z$ basis $N_j$ times, record $k_j$
\Statex \hspace{-1.4em}\emph{Estimate}
\State $K \gets (k_1, \dots, k_L)$;\; $N \gets (N_1, \dots, N_L)$ \Comment{row vectors}
\State $w \gets \pi/[2(2n_L{+}1)]$;\; $\sigma_\theta \gets \big(\textstyle\sum_j 4N_j(2n_j{+}1)^2\big)^{-1/2}$
\State $\Delta \gets \min(w/8,\ 4\sigma_\theta)$;\; grid $\{\theta_g\} \subset (0,\pi/2)$, spacing $\Delta$
\State $P_{gj} \gets p_{n_j}(\theta_g)$
\State $\mathrm{NLL} \gets -\big[K\log P^{\!\top} + (N{-}K)\log(1{-}P)^{\!\top}\big]$ \Comment{one matrix product}
\State $g^\star \gets \arg\min_g \mathrm{NLL}_g$
\State evaluate the log-likelihood $\ell$ [Eq.~\eqref{eq:loglik}] at $25$ points in $\theta_{g^\star}\pm\Delta/2$; parabola through the best three $\to\hat\theta$
\Ensure $\hat a = \sin\hat\theta$
\end{algorithmic}
\vspace{3pt}\hrule\vspace{2pt}
\end{figure}

A practical remark: unlike subspace methods, the likelihood \eqref{eq:loglik} needs no signs, no uniform (virtual) array, no covariance model, and no Gaussianity---it is the exact finite-shot model. It also composes with hardware-noise models by replacing $p_{n}(\theta)$ with the noisy response, which we exploit in Sec.~\ref{sec:noise}.

\section{Ladder design and the aliasing mechanism}\label{sec:ladder}

\begin{figure}[t]
\centering
\includegraphics[width=0.98\columnwidth]{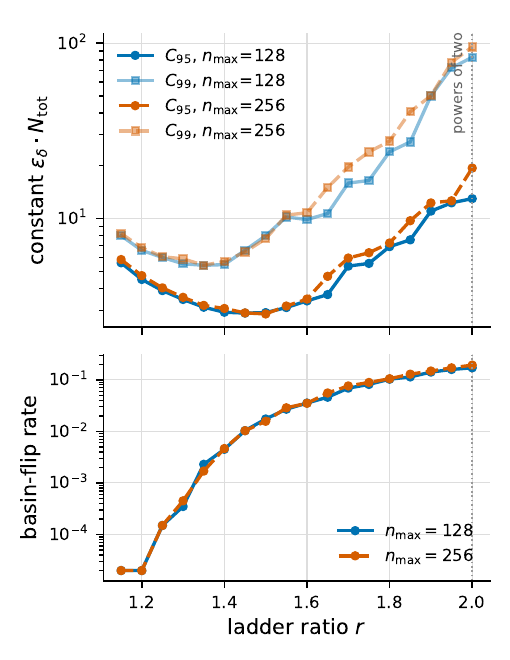}
\caption{\label{fig:ratio}Total-query constants (top) and basin-flip rate (bottom) versus the ladder ratio $r$, for geometric ladders $n_j = \mathrm{round}(r^j)$ capped at $n_{\max} \in \{128, 256\}$ with the canonical shot profile (one shot at the deepest level, one more per shallower level), $2\times 10^4$ trials per point. Both scales collapse onto the same curves. $C_{95}$ is minimized at $r \approx 1.45$--$1.5$ and $C_{99}$ at $r \approx 1.35$; at $r = 2$ (powers of two, dotted line) one trial in six lands in a wrong basin under this profile. Bootstrap $95\%$ confidence intervals are smaller than the markers for $C_{95}$ and up to $\sim 15\%$ for $C_{99}$ at large $r$; flip rates below $10^{-3}$ rest on single-digit event counts and indicate $\approx 0$ rather than precise values.}
\end{figure}

Fix a target maximum depth $n_L$ and consider geometric ladders $n_j \approx r^j$. Figure~\ref{fig:ratio} shows the measured constants and the \emph{basin-flip rate}---the fraction of trials whose estimate is more than one deepest-level basin width from the truth---as functions of $r$, under a deliberately untuned shot profile: one shot at the deepest level and one more per shallower level. We call this the \emph{canonical profile} and use it throughout. Three regimes are visible. For $r \gtrsim 1.8$ the flip rate reaches tens of percent and the constants are dominated by catastrophic errors, which we define as flips by more than ten basins. Near $r = 1.45$ the flip rate is $\sim 1\%$, essentially all of it benign single-basin slips of size comparable to the resolution itself, and $C_{95}$ attains its minimum $\approx 2.9$. Below $r \approx 1.3$ flips are absent but the constants rise again: redundancy is no longer free.

Both sides of the optimum can be explained quantitatively.

\emph{Large $r$: aliasing failures.} The probability that the likelihood prefers a rival angle $\theta + d$ is bounded by the Chernoff/Bhattacharyya bound
\begin{equation}\label{eq:chernoff}
    \Pr[\ell(\theta + d) \ge \ell(\theta)] \;\le\; e^{-E(\theta, d)},\quad
    E = -\sum_j N_j \ln \mathrm{BC}_j,
\end{equation}
with $\mathrm{BC}_j = \sqrt{p_j p_j'} + \sqrt{(1-p_j)(1-p_j')}$, $p_j = p_{n_j}(\theta)$, $p_j' = p_{n_j}(\theta+d)$. A depth $n$ discriminates offsets $d \gtrsim 1/(2n)$ at $\mathcal{O}(1)$ exponent per shot and smaller offsets only quadratically. A rival at offset $d$ is therefore rejected by the ladder levels above $\sim 1/d$, \emph{except} where the fringe patterns of those levels realign simultaneously, i.e., $(2n_j+1)d \approx 0 \pmod{\pi}$ for all deep levels at once. For $r = 2$ such realignments are systematic: they leave notches of near-zero exponent at rival offsets tens of basins away (Fig.~\ref{fig:alias}). These are the catastrophic failures. Suppressing them at $r=2$ requires the large shallow-shot budgets typical of published power-of-two schedules, such as the $72$-shot depth-$0$ level of Ref.~\cite{labib2024csae}. At $r = 1.45$ the incommensurate fringe periods keep the exponent floor high at the same total cost. The bound \eqref{eq:chernoff}, summed over rival basins, is loose in absolute terms (as union--Chernoff bounds are) but reproduces the measured \emph{ordering} of catastrophic-failure rates: a drop of two to three orders of magnitude from $r = 2$ to $r = 1.45$ (Fig.~\ref{fig:alias}).

\emph{Small $r$: redundancy cost.} Index the rungs from the deepest, so $n_k \approx n_L r^{-k}$ and the canonical profile assigns $N_k = k+1$ shots. With $\sum_k (k+1)x^k = (1-x)^{-2}$ this gives $\Ntot \approx n_L\, r^2/(r-1)^2$ and $\sum_k I_{n_k} \approx 16 n_L^2\, r^4/(r^2-1)^2$, so the \ac{CRLB}-limited constant is
\begin{equation}\label{eq:fisherflank}
    C_\delta^{\mathrm{CRLB}} \;=\; \kappa_\delta\,\frac{\Ntot}{\sqrt{\sum_j I_{n_j}}} \;\approx\; \frac{\kappa_\delta}{4}\cdot\frac{r+1}{r-1},
\end{equation}
which reproduces the exact \ac{CRLB} constant of the integer-rounded ladders to within $1$--$12\%$ at the $n_{\max} \approx 128$ scale of Fig.~\ref{fig:ratio} (worst toward $r = 1.2$, where integer rounding distorts the short ladder most) and to $\approx 1\%$ across $r \in [1.2, 2]$ once $n_{\max} \gtrsim 2\times 10^3$. It diverges as $r \to 1$: each extra rung adds queries faster than it adds information. The exponent depends on the shot profile, not on the ladder alone: with a constant number of shots per rung, the factor $(r+1)/(r-1)$ in Eq.~\eqref{eq:fisherflank} would be replaced by its square root. The linear ramp therefore makes the redundancy penalty steeper than a flat allocation would. The interior optimum is the product of a failure term that falls with $r$ and the factor \eqref{eq:fisherflank} that rises as $r \to 1$. Empirically the two balance at $r \approx 1.45$ for the $95\%$ constant and at $r \approx 1.35$ for the $99\%$ constant; the tails prefer more redundancy.

We also tested arithmetic (rather than density) mechanisms: ladders with $(2n_j+1)$ pairwise-coprime primes, and randomly jittered power-of-two ladders, perform no better than powers of two at matched cost. Density is what matters.

\emph{One rung off the geometric law.} Everything above concerns the pure geometric ladder, which is the object that admits the closed forms \eqref{eq:chernoff} and \eqref{eq:fisherflank}. The schedule we actually report in Sec.~\ref{sec:numerics} departs from it in exactly one place: a single additional rung at $\mathrm{round}(n_{\max}/1.3)$, sitting in the gap below the deepest rung. Its purpose is not covered by either side of the analysis above, which describes how well the ladder rejects \emph{rival basins}. The extra rung fixes a different failure. The deepest level's response is exactly symmetric about each of its fringe extrema (Eq.~\eqref{eq:extrema}, analyzed in Sec.~\ref{sec:depth}). Its data therefore cannot tell $\theta$ from its mirror image about the nearest extremum. When the rest of the schedule fails to break that tie, the estimate lands a mirror image away, and such errors make up the far tail of the error distribution. The extra rung places a second fringe pattern, nearly as deep and with an incommensurate period, exactly where the deepest one is blind. Section~\ref{subsec:heisenberg} quantifies what it buys. We keep the plain ladder and the flagship separate for two reasons. First, the analysis of this section is about the plain geometric ladder. Second, the rung is only needed for the unscaled ladder: under the uniform scaling of Sec.~\ref{sec:depth}, multiplying all shot counts by $s$ drives every failure probability, mirror ties included, down exponentially in $s$, so the capped ladder needs no such repair beyond the cap rung itself.

\begin{figure}[t]
\centering
\includegraphics[width=0.98\columnwidth]{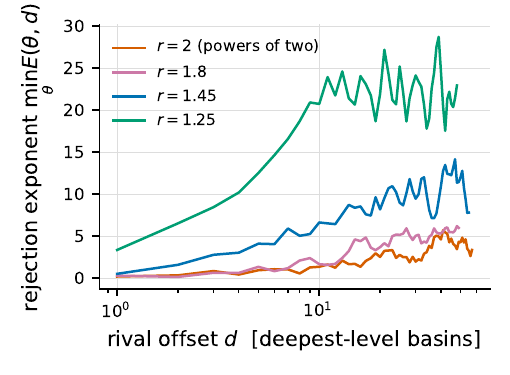}
\caption{\label{fig:alias}Worst-case (over $\theta$) rejection exponent $E$ of Eq.~\eqref{eq:chernoff} against rival offsets $d$, in units of deepest-level basins, for four ladders at $n_{\max} \approx 128$ with the canonical shot profile. Powers of two ($r=2$) and $r=1.8$ leave near-zero exponents---unrejected rivals---out to tens of basins; $r = 1.45$ and $1.25$ reject all far rivals strongly. These profiles reproduce the measured ordering of catastrophic-failure rates: for far rivals ($>10$ basins) at matched profiles, $2.7\%$ ($r{=}2$), $0.7\%$ ($r{=}1.8$), $0.005\%$ ($r{=}1.45$), $\sim 0$ ($r{=}1.25$); the last two rest on one and zero events in $2\times 10^4$ trials.}
\end{figure}

\section{Schedules for depth-limited operation}\label{sec:depth}

Early fault-tolerant devices may only support Grover depths $M$ far below the $\Theta(1/\varepsilon)$ needed at the Heisenberg limit. In that case the best possible trade-off is $M \cdot \Ntot \approx \varepsilon^{-2}$ up to polylogarithmic factors \cite{giurgica2022low, huang2026eigengap, erle2026anydepth}: halving the depth doubles the total query count. Existing proofs establish this scaling but not its constants. Our ladders extend directly to this regime. We quote performance as the trade-off constant
\begin{equation}
    \widetilde C_\delta \;:=\; \varepsilon_\delta^2\, M\, \Ntot;
\end{equation}
the \ac{CRLB} floor for our ladder shape is $\widetilde C_{95} \approx 0.4$, and the value for the unscaled ladder ($s = 1$) follows from Sec.~\ref{sec:numerics}: $\widetilde C_{95} = C_{95}\cdot(\varepsilon_{95} n_L) \approx 2.9 \times 0.21 \approx 0.61$ for the flagship (the capped ladders of Table~\ref{tab:depth} start at $0.6$--$0.8$).

Our schedule for this regime is \emph{uniform scaling}: take the $r = 1.45$ ladder truncated at $M$, with $M$ itself appended as the deepest rung, and the canonical shot profile (cost $B_0$), and multiply every shot count by the same factor, $N_j = s\, N_j^{(0)}$ with $s = B/B_0$, until the budget $B$ is spent. The rest of this section explains why this rule, rather than a top-loaded one, is the right choice.

At first sight it looks wasteful. By Lemma~\ref{lem:fisher}, shots at the cap carry the most information per query, so one would expect to spend the extra budget at depth $M$ and keep the shallow levels at their minimum. But information is not the same as \emph{identifiability}. Data taken at one depth have exact symmetries. Since $\partial_\theta p_M = -(2M{+}1)\sin\big(2(2M{+}1)\theta\big)$, the response is stationary at the fringe extrema
\begin{equation}\label{eq:extrema}
    \theta_{\mathrm{ext}} = \frac{k\pi}{2(2M+1)}, \qquad k \in \mathbb{Z},
\end{equation}
where $p_M$ equals $1$ or $0$ and which are spaced by exactly one deepest-level basin width. About each of them $p_M$ is exactly even, $p_M(\theta_{\mathrm{ext}} + u) = p_M(\theta_{\mathrm{ext}} - u)$ for all $u$, so depth-$M$ shots can never tell $\theta$ from its mirror image about the nearest such point---and small sets of depths retain similar shared blind spots. Whenever the truth falls near such a point, the tie must be broken by the rest of the schedule. Piling shots onto one depth therefore sharpens the likelihood without removing its ambiguities, and the consequences are graded. Call a large block of shots placed at a single depth an \emph{anchor}. For a single anchor at the cap the failure is total. The mirror image sits up to a full basin away, so $\varepsilon_{95}$ stalls at $\varepsilon_{95} M \approx 0.14$--$0.28$ no matter how large the anchor. The tails even \emph{grow} with it: the sharp cap likelihood locks confidently onto a mirror basin, and $\widetilde C_{99}$ diverges. Splitting the anchor over two nearby depths, $\{0.8M, M\}$ say, breaks the exact mirror, and does so well: in our experiments the $95$th percentile then tracks the total \ac{CRLB} ($\varepsilon_{95}/\varepsilon_{95}^{\mathrm{CRLB}} \approx 1.0$) across a $256\times$ anchor range, reaching $\widetilde C_{95} \approx 0.19$. But the blind spots are thinned, not removed. Wherever the two anchor fringes realign (a discrete set of angles fixed by the arithmetic of $2n+1$ at the two depths), the tie is still decided by the unscaled remainder of the schedule, which fails with a small, budget-independent probability. The measure of these blind spots shrinks only like the anchor's likelihood width, that is, \emph{polynomially} in the budget, so a thin tail of multi-basin errors survives; Sec.~\ref{subsec:depthnum} quantifies it. Suppressing these residual ambiguities one by one, with guarantees, is precisely where the polylogarithmic factors of the rigorous constructions come from.

Uniform scaling removes all such failures at once. Scaling all shots by $s$ multiplies every log-likelihood ratio by $s$, so every rejection exponent in Eq.~\eqref{eq:chernoff} grows as $E \to sE$. \emph{Every} failure probability (wrong basins, mirrors, blind spots) therefore dies off exponentially in $s$, all at once, whereas the anchored designs above thin their blind spots only polynomially. The error then follows the total \ac{CRLB} at a fixed small multiple at every scale. Two further properties follow. First, the \ac{CRLB} floor of $\widetilde C_\delta$ is independent of $s$: precision and budget both scale, and their combination cancels; the measured constant converges to that floor for $s \gtrsim 4$, so one measured constant covers all budgets. Second, the rule inherits the design work of Sec.~\ref{sec:ladder} unchanged: relative shot allocations are preserved, so the balance between redundancy and Fisher efficiency that fixed $r = 1.45$ holds at any $s$.

Uniform scaling does have a cost. The ladder shape fixes the ratio of Fisher information to query cost, giving $\widetilde C_{95}^{\mathrm{CRLB}} \approx 0.37$--$0.47$. Concentrating all information at the cap would in principle allow $\approx 0.17$, and the anchored pair above shows that this level is approachable at the $95$th percentile. What the $\sim 2\times$ premium of uniform scaling buys is exponential, budget-uniform suppression of the catastrophic tail, with no placement arithmetic to keep track of. Section~\ref{subsec:depthnum} reports the resulting constants, within $\sim 1.1\times$ of the \ac{CRLB} with no tuning and no validity window, and the tails of both families.

\section{Numerical results}\label{sec:numerics}

\begin{figure}[t]
\centering
\includegraphics[width=0.98\columnwidth]{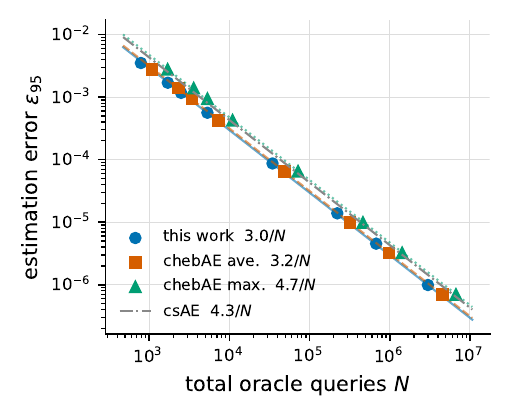}
\caption{\label{fig:scaling}Estimation error at $95\%$ confidence versus total oracle queries, down to $\varepsilon_{95} \approx 10^{-6}$. Circles (blue): this work ($r=1.45$ ladders, canonical shots, $10^4$--$5\times 10^4$ trials per point). Squares/triangles: chebAE average and maximum-observed queries at matched achieved error ($2000$ trials per point, using the implementation of Ref.~\cite{Rall2023amplitudeestimation}). Gray line: the published csAE constant \cite{labib2024csae}.}
\end{figure}

\emph{Protocol.} All simulations draw amplitudes uniformly from $(0.1, 0.9)$ and sample exact binomial counts from Eq.~\eqref{eq:model} (the same model as Refs.~\cite{suzuki2020amplitude, grinko2021iterative, labib2024csae}). Schedules were selected on search seeds and all reported numbers come from disjoint holdout seeds with $2\times 10^4$--$10^6$ trials (stated per table). Uncertainties on all reported constants are nonparametric bootstrap $95\%$ confidence intervals, obtained by resampling the trials with replacement and recomputing the constant. They quantify the finite-trial uncertainty of our estimates and are distinct from the confidence level $\delta$ of the estimated quantity itself. For example, Table~\ref{tab:scale} reports the $99$th-percentile constant $C_{99}$ with a $95\%$ confidence interval around it. The code and data for reproducing every table and figure in this paper, with instructions, can be found at~\href{https://github.com/unitaryfoundation/csAE}{https://github.com/unitaryfoundation/csAE}.

\subsection{Heisenberg-limit schedules}\label{subsec:heisenberg}

\emph{Flagship schedule.} Table~\ref{tab:scale} reports the $r = 1.45$ ladder with the canonical profile and one extra rung at $\mathrm{round}(n_{\max}/1.3)$, across a $28\times$ range of maximum depths. The $95\%$ constant is flat at $C_{95} = 2.78$--$3.02$, the $99\%$ constant is $5.1$--$5.8$, and the parallel constant is $\approx 0.21$. No shot-count growth with scale is required: sweeping the deepest-level shot count at every scale shows a single shot remains optimal for $C_{95}$ throughout. Extending the same rule to deeper schedules holds the constants flat over three and a half decades of target error, $C_{95} = 2.96$--$3.07$ down to $\varepsilon_{95} = 9.9\times 10^{-7}$ ($n_{\max} \approx 2.1\times 10^{5}$, $3.0\times 10^{6}$ queries; Fig.~\ref{fig:scaling}).

\emph{What the extra rung buys.} Section~\ref{sec:ladder} analyzes the plain geometric ladder; here we quantify what the one departure from it contributes. Against that plain ladder at matched scale ($5\times 10^4$ paired trials), the rung lowers the $99\%$ constant by $16$--$20\%$ ($6.25 \to 5.27$ at $n_{\max} = 125$; $7.05 \to 5.66$ at $n_{\max} = 382$) and the parallel constant by $\approx 27\%$ ($0.286 \to 0.210$) at no cost in $C_{95}$ ($2.84 \to 2.88$), paying $\approx 7\%$ of $C_{68}$ ($1.19 \to 1.27$). The gain is not simply due to the extra shots: spending the same $\Ntot$ on uniformly scaling the plain ladder instead gives $C_{95} = 3.14$ and $C_{99} = 5.86$, worse on both. Nor is the placement finely tuned: any divisor in $[1.1, 2.5]$ gives $C_{99} = 5.1$--$5.7$ against the plain ladder's $6.25$, though rungs nearer the cap favor $C_{95}$ and the parallel constant and rungs further down favor $C_{99}$.

\begin{table*}[t]
\caption{\label{tab:scale}Certified performance of the flagship schedule---the $r=1.45$ geometric ladder plus one extra rung at $\mathrm{round}(n_{\max}/1.3)$, canonical shots---across a $28\times$ range of maximum depth. The parallel constant $\varepsilon_{95}n_{\max}$ is $0.211$--$0.214$ throughout. Each column is one schedule; $10^6$ holdout trials each. $C_\delta = \varepsilon_\delta \Ntot$ is the total-query constant at confidence $\delta$. Brackets are bootstrap $95\%$ confidence intervals (protocol in Sec.~\ref{sec:numerics}). At $10^6$ trials they are narrower than the last digit shown for $C_{68}$ and $C_{95}$ (at most $\pm 0.003$ and $\pm 0.009$, respectively), so only those on $C_{99}$ are given. Statistical precision is no longer the limiting uncertainty at this trial count: widening the amplitude prior to $\mathcal{U}(0.01, 0.99)$ moves $C_{95}$ by up to $1.6\%$, roughly five times more.}
\begin{ruledtabular}
\begin{tabular}{lcccccc}
$n_{\max}$ & $60$ & $125$ & $182$ & $382$ & $803$ & $1688$ \\
\hline
$\Ntot$ & $789$ & $1708$ & $2502$ & $5325$ & $11272$ & $23781$ \\
$\varepsilon_{95}$ & $3.5\times 10^{-3}$ & $1.7\times 10^{-3}$ & $1.2\times 10^{-3}$ & $5.6\times 10^{-4}$ & $2.7\times 10^{-4}$ & $1.3\times 10^{-4}$ \\
$C_{68}$ & $1.222$ & $1.270$ & $1.284$ & $1.307$ & $1.317$ & $1.321$ \\
$C_{95}$ & $2.776$ & $2.886$ & $2.918$ & $2.972$ & $2.997$ & $3.017$ \\
$C_{99}$ & $5.11$ & $5.32$ & $5.44$ & $5.66$ & $5.77$ & $5.80$ \\
         & $[5.07, 5.15]$ & $[5.28, 5.37]$ & $[5.40, 5.49]$ & $[5.61, 5.71]$ & $[5.71, 5.82]$ & $[5.74, 5.85]$ \\
\end{tabular}
\end{ruledtabular}
\end{table*}

\emph{Head-to-head with chebAE.} We ran the chebAE implementation of Ref.~\cite{Rall2023amplitudeestimation} under the same protocol at each of eight scales, targeting our achieved $\varepsilon_{95}$ with its failure parameter set to $0.05$ ($2000$ trials per scale). Its constants are its own achieved $95$th-percentile error times its average query count, or times its maximum-observed query count, following the conventions of Ref.~\cite{labib2024csae}. The results are shown in Fig.~\ref{fig:scaling} and Table~\ref{tab:h2h}. The chebAE average-case constants are $3.00$--$3.23$; ours are $2.76$--$3.07$, with a lower point estimate at all eight scales. The chebAE maximum-observed constants are $4.6$--$5.0$; our query count is deterministic. Its sequential-depth constant (its adaptive rounds cannot be parallelized) is $2.8$--$3.0$, against $0.21$ here. We repeated the comparison at the other two confidence levels, running chebAE with failure parameter $0.32$ and $0.01$ against our achieved $\varepsilon_{68}$ and $\varepsilon_{99}$. At $68\%$, chebAE averages $1.41\ [1.32, 1.47]$ and $1.27\ [1.19, 1.35]$ at the $n_{\max} = 125$ and $182$ scales, against $1.26\ [1.25, 1.28]$ and $1.29\ [1.28, 1.30]$ here: a win at the first scale and a tie at the second. At $99\%$, chebAE averages $5.47\ [4.74, 7.33]$ against $5.39\ [5.22, 5.57]$ here, a tie. In summary, a fully parallel deterministic schedule now matches or beats the average-case cost of the best adaptive method at every confidence level tested, while needing $13\times$ less sequential depth.

\subsection{Depth-limited schedules}\label{subsec:depthnum}

Table~\ref{tab:depth} reports the uniform-scaling rule of Sec.~\ref{sec:depth} at caps $M \in \{64, 256\}$ over budgets spanning a $64\times$ range: for $s \ge 4$, $\widetilde C_{95} = 0.39$--$0.58$ and $\widetilde C_{99} = 0.73$--$1.35$, with $\varepsilon_{95}$ within $1.02$--$1.11\times$ of the total \ac{CRLB} and no sign of a plateau; that is,
\begin{equation}
    \Ntot \;\approx\; \frac{0.39\text{--}0.58}{M\,\varepsilon_{95}^2}
\end{equation}
uniformly---the trade-off frontier of Refs.~\cite{giurgica2022low, huang2026eigengap, erle2026anydepth} realized with explicit constants by a one-line schedule rule. Figure~\ref{fig:fan} displays the resulting family of trade-offs at caps $M \in \{16, 64, 256, 1024\}$. Each fixed cap follows its own $\varepsilon \propto N^{-1/2}$ line and leaves the Heisenberg envelope near its endpoint budget. A depth-growth policy $M \propto N^{\beta}$ is a path across this fan (caption of Fig.~\ref{fig:fan}); classical sampling ($\beta = 0$) and the Heisenberg limit ($\beta = 1$) are its two edges. The anchored alternatives of Sec.~\ref{sec:depth} behave as described there. A single anchor at the cap stalls at $\varepsilon_{95}M \approx 0.14$--$0.28$ (measured up to anchors of $10^6$ shots), its tails growing with the anchor ($\widetilde C_{99}$ in the hundreds and rising). Pair anchors, and anchors spread over a narrow band of depths below the cap, beat uniform scaling at the $95$th percentile, with $\widetilde C_{95} \approx 0.19$--$0.23$ at $\varepsilon_{95}/\varepsilon_{95}^{\mathrm{CRLB}} \approx 1.0$. But they carry the polynomially suppressed catastrophic tail of Sec.~\ref{sec:depth}. At $10^{5}$--$10^{6}$ trials we measure wrong-basin rates of $10^{-5}$--$10^{-4}$ with errors hundreds of times $\varepsilon_{95}$, and $\widetilde C_{99.9}$ between $0.6$ and $\approx 90$ depending on cap and budget, against $1.9$--$4.2$ (falling with $s$) for uniform scaling.

\begin{table*}[t]
\caption{\label{tab:h2h}Head-to-head at matched achieved error, at three confidence levels. ``total'' is $\varepsilon_\delta \Ntot$. ``max obs.'' uses chebAE's maximum observed query count over its trials, an empirical statistic rather than a worst-case bound; it is absent for the deterministic schedules, whose query count has no spread. ``seq.'' is the maximum sequential Grover depth constant, the depth a single processor must supply. We ran chebAE under our protocol with its failure parameter at $0.32$, $0.05$, and $0.01$ against our achieved $\varepsilon_\delta$, under the conventions of Ref.~\cite{labib2024csae}; csAE values are as published there. Ranges span the scales tested at each level; the upper end of our $99\%$ range is the deepest extension point of Fig.~\ref{fig:scaling} at $2\times 10^4$ trials, while the $10^6$-trial schedules of Table~\ref{tab:scale} give $5.1$--$5.8$. The flagship is competitive or better at all three confidence levels, and its sequential depth is $13\times$ smaller throughout.}
\begin{ruledtabular}
\begin{tabular}{lccccccccc}
 & \multicolumn{3}{c}{$68\%$} & \multicolumn{3}{c}{$95\%$} & \multicolumn{3}{c}{$99\%$} \\
 & total & max obs. & seq. & total & max obs. & seq. & total & max obs. & seq. \\
\hline
this work & $1.23$--$1.33$ & --- & $0.093$ & $2.76$--$3.07$ & --- & $0.21$ & $5.1$--$6.2$ & --- & $0.39$--$0.44$ \\
chebAE \cite{Rall2023amplitudeestimation} & $1.27$--$1.41$ & $1.97$--$2.43$ & $1.13$--$1.28$ & $3.00$--$3.23$ & $4.56$--$5.00$ & $2.8$--$3.0$ & $5.5$ & $8.1$ & $5.1$ \\
csAE \cite{labib2024csae} & $1.67$ & --- & $0.102$ & $4.3$ & --- & $0.26$ & $8.9$ & --- & $0.27$ \\
\end{tabular}
\end{ruledtabular}
\end{table*}

\begin{table*}[t]
\caption{\label{tab:depth}Depth-limited operation: the $r=1.45$ ladder truncated at $M$ with $M$ appended as the deepest rung, all shots scaled by a single factor $s$. Each column is one schedule; $2\times 10^4$ trials each. $\widetilde C_\delta = \varepsilon_\delta^2 M \Ntot$ is the depth--query trade-off constant; its \ac{CRLB} floor is independent of $s$ by construction, and the measured value converges to it as $s$ grows.}
\begin{ruledtabular}
\begin{tabular}{lcccccccc}
$M$ & \multicolumn{4}{c}{$64$} & \multicolumn{4}{c}{$256$} \\
$s$ & $1$ & $4$ & $16$ & $64$ & $1$ & $4$ & $16$ & $64$ \\
\hline
$\Ntot$ & $821$ & $3284$ & $13136$ & $52544$ & $2660$ & $10640$ & $42560$ & $170240$ \\
$\varepsilon_{95}$ & $3.4{\times}10^{-3}$ & $1.5{\times}10^{-3}$ & $7.0{\times}10^{-4}$ & $3.4{\times}10^{-4}$ & $1.1{\times}10^{-3}$ & $4.6{\times}10^{-4}$ & $2.2{\times}10^{-4}$ & $1.1{\times}10^{-4}$ \\
$\widetilde C_{68}$ & $0.111$ & $0.100$ & $0.090$ & $0.088$ & $0.145$ & $0.124$ & $0.117$ & $0.112$ \\
$\widetilde C_{95}$ & $0.60$ & $0.46$ & $0.41$ & $0.39$ & $0.80$ & $0.58$ & $0.52$ & $0.51$ \\
$\widetilde C_{99}$ & $2.2$ & $1.01$ & $0.81$ & $0.73$ & $4.6$ & $1.35$ & $1.07$ & $1.00$ \\
$\varepsilon_{95}/\varepsilon_{95}^{\mathrm{CRLB}}$ & $1.27$ & $1.10$ & $1.05$ & $1.02$ & $1.30$ & $1.11$ & $1.05$ & $1.04$ \\
\end{tabular}
\end{ruledtabular}
\end{table*}

\begin{figure}[t]
\centering
\includegraphics[width=0.98\columnwidth]{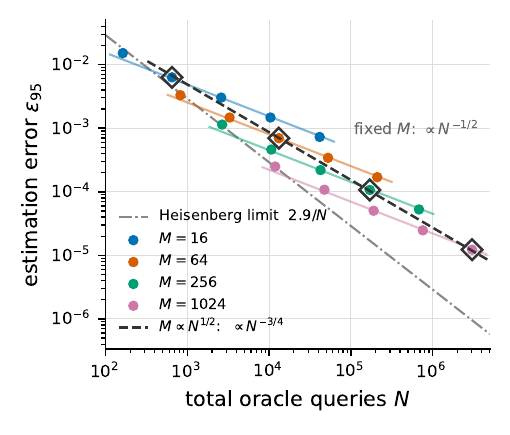}
\caption{\label{fig:fan}The depth--query trade-off as a family of power laws. Each fixed depth cap $M$ (uniform scaling, $s = 1$--$256$) follows $\varepsilon_{95} = \sqrt{\widetilde C_{95}/(M N)}$, i.e., slope $-1/2$: this is the $\beta = 0$ policy. The Heisenberg envelope $\varepsilon_{95} = 2.9/N$ (gray, slope $-1$) is the $\beta = 1$ policy, $M \propto N$. An intermediate depth-growth policy $M \propto N^{\beta}$ is a \emph{path across the fan}, switching to a deeper cap as the budget grows; the dashed line threads the four measured points with $M \propto N^{1/2}$ (open diamonds) and exhibits the predicted slope $-(1+\beta)/2 = -3/4$.}
\end{figure}

\section{Noise-aware estimation}\label{sec:noise}

The likelihood formulation extends to noise models by replacing the response function. For a depolarizing channel with per-oracle-call rate $\eta$, the fringe visibility at depth $n$ is $V_n = (1-\eta)^{2n+1}$ and
\begin{equation}\label{eq:noisymodel}
    p_n(\theta; \eta) = V_n \cos^2\!\big((2n+1)\theta\big) + \tfrac{1 - V_n}{2},
\end{equation}
a one-line change to Eq.~\eqref{eq:loglik} when $\eta$ is known (e.g., from calibration). Table~\ref{tab:noise} compares the noise-aware estimator with a mismatched estimator that ignores the noise, on the flagship schedule at two scales. Degradation is governed by the visibility of the deepest level, i.e., by the product $\eta\, n_{\max}$: for $\eta\, n_{\max} \lesssim 0.005$ the constants are essentially unchanged, at $\eta\, n_{\max} \approx 0.02$ they rise by under $10\%$, and the degradation is graceful well beyond that. Noise awareness costs nothing and matters more as $\eta$ grows: at $\eta = 10^{-3}$ per oracle call the $n_{\max} = 540$ schedule reaches $C_{95} = 7.0$ with the matched likelihood versus $10.2$ when the noise is ignored. For larger $\eta$ the correct response is to cap the ladder depth near $\sim 1/\eta$ (the likelihood then gives the resulting constant directly); a systematic study of noise-adapted ladders is left to future work.

\begin{table}[t]
\caption{\label{tab:noise}Depolarizing noise (rate $\eta$ per oracle call, Eq.~\eqref{eq:noisymodel}): $C_{95}$ of the flagship schedule with noise-aware vs.\ noise-ignorant likelihoods, $2\times 10^5$ trials.}
\begin{ruledtabular}
\begin{tabular}{lcccc}
 & \multicolumn{2}{c}{$n_{\max} = 182$} & \multicolumn{2}{c}{$n_{\max} = 540$} \\
$\eta$ & aware & ignorant & aware & ignorant \\
\hline
$0$              & $2.91$ & $2.91$ & $2.98$ & $2.98$ \\
$10^{-5}$        & $2.94$ & $2.94$ & $3.04$ & $3.04$ \\
$10^{-4}$        & $3.16$ & $3.19$ & $3.46$ & $3.56$ \\
$3\times10^{-4}$ & $3.55$ & $3.75$ & $4.19$ & $4.81$ \\
$10^{-3}$        & $4.70$ & $5.79$ & $7.03$ & $10.15$ \\
\end{tabular}
\end{ruledtabular}
\end{table}

\section{Estimator optimality beyond the Cram\'er--Rao bound}\label{sec:zzb}

The \ac{CRLB} used above is a \emph{local} bound: it presumes the estimate already lies in the correct basin, so it cannot price the ambiguity that Sec.~\ref{sec:ladder} identifies as the binding constraint. Taken literally it is also minimized by a useless schedule: a single shot at depth $n$ has $\kappa_{95}\Ntot/\sqrt{\sum_j I_{n_j}} = \kappa_{95}\,n/(2(2n+1)) < \kappa_{95}/4 \approx 0.42$, yet no estimator can extract anything from one bit, so the interesting question is not how close we sit to \emph{that} floor. The Ziv--Zakai bound introduced below assigns the same schedule (one shot at depth $125$) $C_{95} \approx 46$ (essentially the no-data value $\varepsilon_{95} \approx 0.38$ of the constant estimate $\hat a = 1/2$, times $\Ntot = 125$), and is in this sense the first quantity here that penalizes a schedule for being unidentifiable rather than only for being uninformative.

A local bound stops being achievable once well-separated rivals survive. This is the \emph{threshold effect} familiar from time-delay and direction-of-arrival estimation, and the bound built for it is due to Ziv and Zakai \cite{ziv1969some, bell1997extended}. It reduces estimation to binary hypothesis testing, which is precisely the rival-basin structure: for \emph{any} estimator,
\begin{equation}\label{eq:zzb}
    \Pr\big[|\hat a - a| \ge g/2\big] \;\ge\; \int \big[p(a) + p(a{+}g)\big]\, P_{\min}(a, a{+}g)\, da ,
\end{equation}
with $P_{\min}$ the minimum error probability of deciding between $a$ and $a+g$ from the same data. Since $|\hat a - a| \ge g'/2$ implies $|\hat a - a| \ge g/2$ for $g' \ge g$, the right-hand side may be replaced by its supremum over $g' \ge g$; the result bounds the entire error survival function, and hence \emph{every} percentile at once from a single computation. We evaluate $P_{\min}$ without inequalities rather than through the usual Bhattacharyya bound, using $\sum_x \min(p_0 P, p_1 Q) = p_0\,\mathbb{E}_{P}\big[\min(1, (p_1/p_0)e^{\Delta})\big]$ with $\Delta$ the log-likelihood ratio, an unbiased Monte Carlo average. This matters: the Bhattacharyya form is loose enough here to return a bound \emph{below} the \ac{CRLB}, and so carries no information. Appendix~\ref{app:zzb} derives Eq.~\eqref{eq:zzb}, gives the monotone-envelope (``valley-filling'') step and the reduction of every percentile to one curve, and records the numerical protocol.

\begin{table}[t]
\caption{\label{tab:zzb}How close the exact global \ac{ML} estimate is to optimal, for the flagship schedule at $n_{\max}=125$. The \ac{CRLB} is local; the Ziv--Zakai bound of Eq.~\eqref{eq:zzb} prices the ambiguity as well and applies to any estimator whatsoever. Achieved values are those of Table~\ref{tab:scale}; the achieved/Ziv--Zakai ratios at $n_{\max}=382$ are $1.04$, $1.06$, $1.35$ (to the $\approx 3\%$ accuracy of Appendix~\ref{app:zzb}).}
\begin{ruledtabular}
\begin{tabular}{lcccc}
$\delta$ & \ac{CRLB} & Ziv--Zakai & achieved & achieved/ZZ \\
\hline
$68$ & $1.135$ & $1.219$ & $1.270$ & $1.04$ \\
$95$ & $2.361$ & $2.718$ & $2.886$ & $1.06$ \\
$99$ & $3.195$ & $3.980$ & $5.324$ & $1.34$ \\
\end{tabular}
\end{ruledtabular}
\end{table}

Table~\ref{tab:zzb} gives the result. At $68\%$ and $95\%$ confidence the exact global \ac{ML} estimate sits within $4\%$ and $6\%$ of a bound no estimator can beat, at both scales tested. Most of the apparent $22\%$ shortfall against the \ac{CRLB} at $\delta = 95$ is therefore not slack but an irreducible cost of ambiguity. Little further gain is available from post-processing: whatever room remains at these confidence levels is in the \emph{schedule}, not the estimator.

At $99\%$ the picture is different, and the same analysis says why. The gap widens to $34$--$35\%$, and there the \ac{ML} estimate is no longer the right decision rule. A percentile metric is a tolerance loss, $\mathbf{1}\{|\hat a - a| > h\}$, whose Bayes rule maximizes the posterior mass \emph{within} $h$ of the estimate, that is, reports the center of the highest-mass window of half-width $h$ (the \emph{window rule}), not the mode. The \ac{ML} estimate is the $h \to 0$ member of that family (up to the prior's Jacobian, an $\mathcal{O}(\sigma_\theta^2)$ shift), which is the appropriate choice only in the limit of vanishing tolerance. Taking $h \approx 2.8\,\sigma_\theta$ (a window in $\theta$) instead lowers $C_{99}$ by $14$--$16\%$, at a cost of $\approx 2\%$ in $C_{95}$ and $\approx 9\%$ in $C_{68}$ ($2\times 10^5$ trials, both scales). We keep the plain \ac{ML} estimate as the default because it is parameter-free and because it is already near-optimal at the $95\%$ level we report. For a tail-critical application the window rule is the right choice, and it is a small change to the code.

One caveat on reading the $99\%$ row. Equation~\eqref{eq:zzb} is a \emph{pairwise} bound, testing $a$ against one rival at a time, whereas in the far tail many basins compete simultaneously; pairwise bounds are known to be loose in that regime. The $1.34$ should therefore be read as an upper bound on what any estimator could recover at $99\%$, not as an achievable target---consistent with the window rule recovering $14$--$16\%$ of it rather than all.

Neither the \ac{CRLB} nor Eq.~\eqref{eq:zzb} is a bound for the \emph{class}: both are computed for a fixed schedule. They differ, though, in whether minimizing over schedules is even meaningful. The \ac{CRLB} is not---beyond being minimized by the degenerate schedule above, it falls monotonically as the ladder is made sparser, because it prices information and ignores identifiability, so it cannot see the optimum in $r$ at all. The Ziv--Zakai bound does have an \emph{interior} minimum in $r$ over geometric ladders, reproducing on its own the Fisher-versus-aliasing trade-off that Eq.~\eqref{eq:fisherflank} captures on the small-$r$ side only. That makes $\min_{\text{schedules}}$ of Eq.~\eqref{eq:zzb}, taken at a fixed target error or in the limit $\varepsilon \to 0$ (at any fixed small budget the bound is trivially small), the natural formulation of the missing class-level lower bound on $\varepsilon_{95}\Ntot$. We do not evaluate it here, and the obstacle is the looseness just noted rather than the minimization itself: the pairwise structure loosens precisely where aliasing is severe, which is the region such a minimization would be drawn into. The substantive open problem is therefore a bound that prices many simultaneously competing rivals.

\section{Discussion}\label{sec:discussion}

Non-adaptive \ac{AE} carries two structural advantages, deterministic resource counts and embarrassingly parallel execution, that adaptive methods cannot offer, and our results show it need not pay for them in average-case total queries. The remaining gaps point to where adaptivity helps. The plain geometric ladder concedes $\sim 15\%$ to chebAE's average at $99\%$ confidence, which is what one expects if adaptivity is tail insurance: its value should grow with $\delta$, and it does. That concession is not structural. A single extra rung, which breaks the deepest level's symmetry about its fringe extrema instead of spending extra rounds on unlucky runs, closes it. At the same time it improves the parallel constant by $27\%$, something adaptivity cannot do. The residual cost is at $68\%$, where the extra rung is insurance the typical run does not need. We did not find further gains within the fully non-adaptive class. Shot reallocation across a fixed ladder, coprime and jittered ladders, and two-round semi-adaptive schemes all failed to improve on the $r \approx 1.45$ geometric ladder. (In the semi-adaptive schemes the round-two depth is capped by the round-one posterior width, which is exactly the aliasing constraint of Sec.~\ref{sec:ladder}.)

How close to optimal are these constants? The estimator is close to optimal: Sec.~\ref{sec:zzb} places the exact global \ac{ML} estimate within $4\%$ and $6\%$ of a Ziv--Zakai bound at $68\%$ and $95\%$ confidence ($\varepsilon_{95}\, \Ntot \approx 2.9$ achieved against a bound of $\approx 2.7$). Measured against the schedule's \ac{CRLB} the gap looks like $\approx 22\%$, but most of that is the \ac{CRLB}'s blindness to ambiguity rather than anything an estimator could collect. What room remains is in the \emph{schedule}; the class-level bound that would price it is still missing (Sec.~\ref{sec:zzb}), and our schedule searches suggest the achievable value is not far below $2.8$.

For the parallel direction the accounting is different. No bound on $\varepsilon_{95}\, n_L$ alone exists, since at a fixed depth more shots always buy more precision (the fan of Fig.~\ref{fig:fan} makes this visible). The bounded object is the \emph{product} $\widetilde C = (\varepsilon_{95}\Ntot)(\varepsilon_{95} n_L)$, and there uniform scaling sits within $\sim 1.1\times$ of its ladder-shape floor ($\approx 0.37$--$0.47$). The absolute floor ($\approx 0.17$, all information at the cap) is approachable at the $95$th percentile by anchored-pair schedules, but only at the price of a polynomially decaying catastrophic tail (Sec.~\ref{sec:depth}). The $\sim 2\times$ premium of uniform scaling is what exponential, budget-uniform tail control costs in this class.

On the depth axis, Sec.~\ref{sec:depth} supplies explicit constants for the depth-limited trade-off established asymptotically in Refs.~\cite{giurgica2022low, huang2026eigengap, erle2026anydepth}; logarithmic-depth \ac{AE} via distributed GHZ states \cite{oshio2025near} occupies a further regime requiring inter-processor entanglement, a resource the present method does not use. The uniform-in-angle constructions of Ref.~\cite{erle2026anydepth} are a natural route to lifting the amplitude-range restriction below.

All results use the standard shot-noise model at $a \in (0.1, 0.9)$, matching the conventions of Refs.~\cite{suzuki2020amplitude, grinko2021iterative, Rall2023amplitudeestimation, labib2024csae}. The results do not depend on that convention: widening the prior to $\mathcal{U}(0.01, 0.99)$ moves $C_{95}$ by under $2\%$. The direction of the change matters. Because the $\cos\theta$ Jacobian compresses the additive error as $a \to 1$, the wider prior is slightly \emph{easier} for both methods. Being adaptive, chebAE gains more from this, so our margin over it narrows rather than widens.

Every claim above is internal to the class of incoherent Grover circuits with $Z$-basis measurements: strategies that spend queries coherently---\ac{QPE}-type algorithms or distributed-entanglement schemes \cite{oshio2025near}---are bounded differently and can in principle improve the constants further, at the hardware cost this class is designed to avoid.

\begin{acknowledgments}
The author thanks the co-authors of Ref.~\cite{labib2024csae} for the framework this work builds on. Numerical exploration, optimization, and drafting were carried out with substantial assistance from Claude (Anthropic).
\end{acknowledgments}

\bibliography{refs}

\begin{thebibliography}{19}%
\makeatletter
\providecommand \@ifxundefined [1]{%
 \@ifx{#1\undefined}
}%
\providecommand \@ifnum [1]{%
 \ifnum #1\expandafter \@firstoftwo
 \else \expandafter \@secondoftwo
 \fi
}%
\providecommand \@ifx [1]{%
 \ifx #1\expandafter \@firstoftwo
 \else \expandafter \@secondoftwo
 \fi
}%
\providecommand \natexlab [1]{#1}%
\providecommand \enquote  [1]{``#1''}%
\providecommand \bibnamefont  [1]{#1}%
\providecommand \bibfnamefont [1]{#1}%
\providecommand \citenamefont [1]{#1}%
\providecommand \href@noop [0]{\@secondoftwo}%
\providecommand \href [0]{\begingroup \@sanitize@url \@href}%
\providecommand \@href[1]{\@@startlink{#1}\@@href}%
\providecommand \@@href[1]{\endgroup#1\@@endlink}%
\providecommand \@sanitize@url [0]{\catcode `\\12\catcode `\$12\catcode
  `\&12\catcode `\#12\catcode `\^12\catcode `\_12\catcode `\%12\relax}%
\providecommand \@@startlink[1]{}%
\providecommand \@@endlink[0]{}%
\providecommand \url  [0]{\begingroup\@sanitize@url \@url }%
\providecommand \@url [1]{\endgroup\@href {#1}{\urlprefix }}%
\providecommand \urlprefix  [0]{URL }%
\providecommand \Eprint [0]{\href }%
\providecommand \doibase [0]{https://doi.org/}%
\providecommand \selectlanguage [0]{\@gobble}%
\providecommand \bibinfo  [0]{\@secondoftwo}%
\providecommand \bibfield  [0]{\@secondoftwo}%
\providecommand \translation [1]{[#1]}%
\providecommand \BibitemOpen [0]{}%
\providecommand \bibitemStop [0]{}%
\providecommand \bibitemNoStop [0]{.\EOS\space}%
\providecommand \EOS [0]{\spacefactor3000\relax}%
\providecommand \BibitemShut  [1]{\csname bibitem#1\endcsname}%
\let\auto@bib@innerbib\@empty
\bibitem [{\citenamefont {Brassard}\ \emph {et~al.}(2002)\citenamefont
  {Brassard}, \citenamefont {H{\o}yer}, \citenamefont {Mosca},\ and\
  \citenamefont {Tapp}}]{brassard2002quantum}%
  \BibitemOpen
  \bibfield  {author} {\bibinfo {author} {\bibfnamefont {G.}~\bibnamefont
  {Brassard}}, \bibinfo {author} {\bibfnamefont {P.}~\bibnamefont {H{\o}yer}},
  \bibinfo {author} {\bibfnamefont {M.}~\bibnamefont {Mosca}},\ and\ \bibinfo
  {author} {\bibfnamefont {A.}~\bibnamefont {Tapp}},\ }\bibfield  {title}
  {\bibinfo {title} {Quantum amplitude amplification and estimation},\ }\href
  {https://doi.org/10.1090/conm/305/05215} {\bibfield  {journal} {\bibinfo
  {journal} {Contemporary Mathematics}\ }\textbf {\bibinfo {volume} {305}},\
  \bibinfo {pages} {53} (\bibinfo {year} {2002})}\BibitemShut {NoStop}%
\bibitem [{\citenamefont {Montanaro}(2015)}]{montanaro2015quantum}%
  \BibitemOpen
  \bibfield  {author} {\bibinfo {author} {\bibfnamefont {A.}~\bibnamefont
  {Montanaro}},\ }\bibfield  {title} {\bibinfo {title} {Quantum speedup of
  {M}onte {C}arlo methods},\ }\href {https://doi.org/10.1098/rspa.2015.0301}
  {\bibfield  {journal} {\bibinfo  {journal} {Proceedings of the Royal Society
  A: Mathematical, Physical and Engineering Sciences}\ }\textbf {\bibinfo
  {volume} {471}},\ \bibinfo {pages} {20150301} (\bibinfo {year}
  {2015})}\BibitemShut {NoStop}%
\bibitem [{\citenamefont {Rebentrost}\ \emph {et~al.}(2018)\citenamefont
  {Rebentrost}, \citenamefont {Gupt},\ and\ \citenamefont
  {Bromley}}]{rebentrost2018quantum}%
  \BibitemOpen
  \bibfield  {author} {\bibinfo {author} {\bibfnamefont {P.}~\bibnamefont
  {Rebentrost}}, \bibinfo {author} {\bibfnamefont {B.}~\bibnamefont {Gupt}},\
  and\ \bibinfo {author} {\bibfnamefont {T.~R.}\ \bibnamefont {Bromley}},\
  }\bibfield  {title} {\bibinfo {title} {Quantum computational finance: {M}onte
  {C}arlo pricing of financial derivatives},\ }\href
  {https://doi.org/10.1103/PhysRevA.98.022321} {\bibfield  {journal} {\bibinfo
  {journal} {Phys. Rev. A}\ }\textbf {\bibinfo {volume} {98}},\ \bibinfo
  {pages} {022321} (\bibinfo {year} {2018})}\BibitemShut {NoStop}%
\bibitem [{\citenamefont {Stamatopoulos}\ \emph {et~al.}(2020)\citenamefont
  {Stamatopoulos}, \citenamefont {Egger}, \citenamefont {Sun}, \citenamefont
  {Zoufal}, \citenamefont {Iten}, \citenamefont {Shen},\ and\ \citenamefont
  {Woerner}}]{Stamatopoulos2020optionpricingusing}%
  \BibitemOpen
  \bibfield  {author} {\bibinfo {author} {\bibfnamefont {N.}~\bibnamefont
  {Stamatopoulos}}, \bibinfo {author} {\bibfnamefont {D.~J.}\ \bibnamefont
  {Egger}}, \bibinfo {author} {\bibfnamefont {Y.}~\bibnamefont {Sun}}, \bibinfo
  {author} {\bibfnamefont {C.}~\bibnamefont {Zoufal}}, \bibinfo {author}
  {\bibfnamefont {R.}~\bibnamefont {Iten}}, \bibinfo {author} {\bibfnamefont
  {N.}~\bibnamefont {Shen}},\ and\ \bibinfo {author} {\bibfnamefont
  {S.}~\bibnamefont {Woerner}},\ }\bibfield  {title} {\bibinfo {title} {Option
  {P}ricing using {Q}uantum {C}omputers},\ }\href
  {https://doi.org/10.22331/q-2020-07-06-291} {\bibfield  {journal} {\bibinfo
  {journal} {{Quantum}}\ }\textbf {\bibinfo {volume} {4}},\ \bibinfo {pages}
  {291} (\bibinfo {year} {2020})}\BibitemShut {NoStop}%
\bibitem [{\citenamefont {Chakrabarti}\ \emph {et~al.}(2021)\citenamefont
  {Chakrabarti}, \citenamefont {Krishnakumar}, \citenamefont {Mazzola},
  \citenamefont {Stamatopoulos}, \citenamefont {Woerner},\ and\ \citenamefont
  {Zeng}}]{Chakrabarti2021thresholdquantum}%
  \BibitemOpen
  \bibfield  {author} {\bibinfo {author} {\bibfnamefont {S.}~\bibnamefont
  {Chakrabarti}}, \bibinfo {author} {\bibfnamefont {R.}~\bibnamefont
  {Krishnakumar}}, \bibinfo {author} {\bibfnamefont {G.}~\bibnamefont
  {Mazzola}}, \bibinfo {author} {\bibfnamefont {N.}~\bibnamefont
  {Stamatopoulos}}, \bibinfo {author} {\bibfnamefont {S.}~\bibnamefont
  {Woerner}},\ and\ \bibinfo {author} {\bibfnamefont {W.~J.}\ \bibnamefont
  {Zeng}},\ }\bibfield  {title} {\bibinfo {title} {A {T}hreshold for {Q}uantum
  {A}dvantage in {D}erivative {P}ricing},\ }\href
  {https://doi.org/10.22331/q-2021-06-01-463} {\bibfield  {journal} {\bibinfo
  {journal} {{Quantum}}\ }\textbf {\bibinfo {volume} {5}},\ \bibinfo {pages}
  {463} (\bibinfo {year} {2021})}\BibitemShut {NoStop}%
\bibitem [{\citenamefont {Suzuki}\ \emph {et~al.}(2020)\citenamefont {Suzuki},
  \citenamefont {Uno}, \citenamefont {Raymond}, \citenamefont {Tanaka},
  \citenamefont {Onodera},\ and\ \citenamefont
  {Yamamoto}}]{suzuki2020amplitude}%
  \BibitemOpen
  \bibfield  {author} {\bibinfo {author} {\bibfnamefont {Y.}~\bibnamefont
  {Suzuki}}, \bibinfo {author} {\bibfnamefont {S.}~\bibnamefont {Uno}},
  \bibinfo {author} {\bibfnamefont {R.}~\bibnamefont {Raymond}}, \bibinfo
  {author} {\bibfnamefont {T.}~\bibnamefont {Tanaka}}, \bibinfo {author}
  {\bibfnamefont {T.}~\bibnamefont {Onodera}},\ and\ \bibinfo {author}
  {\bibfnamefont {N.}~\bibnamefont {Yamamoto}},\ }\bibfield  {title} {\bibinfo
  {title} {Amplitude estimation without phase estimation},\ }\href
  {https://doi.org/10.1007/s11128-019-2565-2} {\bibfield  {journal} {\bibinfo
  {journal} {Quantum Information Processing}\ }\textbf {\bibinfo {volume}
  {19}},\ \bibinfo {pages} {75} (\bibinfo {year} {2020})}\BibitemShut {NoStop}%
\bibitem [{\citenamefont {Aaronson}\ and\ \citenamefont
  {Rall}(2020)}]{aaronson2020quantum}%
  \BibitemOpen
  \bibfield  {author} {\bibinfo {author} {\bibfnamefont {S.}~\bibnamefont
  {Aaronson}}\ and\ \bibinfo {author} {\bibfnamefont {P.}~\bibnamefont
  {Rall}},\ }\bibfield  {title} {\bibinfo {title} {Quantum approximate
  counting, simplified},\ }in\ \href
  {https://doi.org/10.1137/1.9781611976014.5} {\emph {\bibinfo {booktitle}
  {Symposium on {S}implicity in {A}lgorithms (SOSA)}}}\ (\bibinfo
  {organization} {SIAM},\ \bibinfo {year} {2020})\ pp.\ \bibinfo {pages}
  {24--32}\BibitemShut {NoStop}%
\bibitem [{\citenamefont {Grinko}\ \emph {et~al.}(2021)\citenamefont {Grinko},
  \citenamefont {Gacon}, \citenamefont {Zoufal},\ and\ \citenamefont
  {Woerner}}]{grinko2021iterative}%
  \BibitemOpen
  \bibfield  {author} {\bibinfo {author} {\bibfnamefont {D.}~\bibnamefont
  {Grinko}}, \bibinfo {author} {\bibfnamefont {J.}~\bibnamefont {Gacon}},
  \bibinfo {author} {\bibfnamefont {C.}~\bibnamefont {Zoufal}},\ and\ \bibinfo
  {author} {\bibfnamefont {S.}~\bibnamefont {Woerner}},\ }\bibfield  {title}
  {\bibinfo {title} {Iterative quantum amplitude estimation},\ }\href
  {https://doi.org/10.1038/s41534-021-00379-1} {\bibfield  {journal} {\bibinfo
  {journal} {npj Quantum Information}\ }\textbf {\bibinfo {volume} {7}},\
  \bibinfo {pages} {52} (\bibinfo {year} {2021})}\BibitemShut {NoStop}%
\bibitem [{\citenamefont {Venkateswaran}\ and\ \citenamefont
  {O'Donnell}(2021)}]{venkateswaran2021quantum}%
  \BibitemOpen
  \bibfield  {author} {\bibinfo {author} {\bibfnamefont {R.}~\bibnamefont
  {Venkateswaran}}\ and\ \bibinfo {author} {\bibfnamefont {R.}~\bibnamefont
  {O'Donnell}},\ }\bibfield  {title} {\bibinfo {title} {Quantum approximate
  counting with nonadaptive {G}rover iterations},\ }in\ \href
  {https://doi.org/10.4230/LIPIcs.STACS.2021.59} {\emph {\bibinfo {booktitle}
  {38th International Symposium on Theoretical Aspects of Computer Science
  (STACS 2021)}}},\ \bibinfo {series} {LIPIcs}, Vol.\ \bibinfo {volume} {187}\
  (\bibinfo {year} {2021})\ pp.\ \bibinfo {pages} {59:1--59:12},\ \Eprint
  {https://arxiv.org/abs/2010.04370} {arXiv:2010.04370 [quant-ph]} \BibitemShut
  {NoStop}%
\bibitem [{\citenamefont {Giurgica-Tiron}\ \emph {et~al.}(2022)\citenamefont
  {Giurgica-Tiron}, \citenamefont {Kerenidis}, \citenamefont {Labib},
  \citenamefont {Prakash},\ and\ \citenamefont {Zeng}}]{giurgica2022low}%
  \BibitemOpen
  \bibfield  {author} {\bibinfo {author} {\bibfnamefont {T.}~\bibnamefont
  {Giurgica-Tiron}}, \bibinfo {author} {\bibfnamefont {I.}~\bibnamefont
  {Kerenidis}}, \bibinfo {author} {\bibfnamefont {F.}~\bibnamefont {Labib}},
  \bibinfo {author} {\bibfnamefont {A.}~\bibnamefont {Prakash}},\ and\ \bibinfo
  {author} {\bibfnamefont {W.}~\bibnamefont {Zeng}},\ }\bibfield  {title}
  {\bibinfo {title} {Low depth algorithms for quantum amplitude estimation},\
  }\href {https://doi.org/10.22331/q-2022-06-27-745} {\bibfield  {journal}
  {\bibinfo  {journal} {Quantum}\ }\textbf {\bibinfo {volume} {6}},\ \bibinfo
  {pages} {745} (\bibinfo {year} {2022})}\BibitemShut {NoStop}%
\bibitem [{\citenamefont {Rall}\ and\ \citenamefont
  {Fuller}(2023)}]{Rall2023amplitudeestimation}%
  \BibitemOpen
  \bibfield  {author} {\bibinfo {author} {\bibfnamefont {P.}~\bibnamefont
  {Rall}}\ and\ \bibinfo {author} {\bibfnamefont {B.}~\bibnamefont {Fuller}},\
  }\bibfield  {title} {\bibinfo {title} {Amplitude {E}stimation from {Q}uantum
  {S}ignal {P}rocessing},\ }\href {https://doi.org/10.22331/q-2023-03-02-937}
  {\bibfield  {journal} {\bibinfo  {journal} {{Quantum}}\ }\textbf {\bibinfo
  {volume} {7}},\ \bibinfo {pages} {937} (\bibinfo {year} {2023})}\BibitemShut
  {NoStop}%
\bibitem [{\citenamefont {Ram{\^o}a}\ and\ \citenamefont
  {Santos}(2025)}]{ramoa2025bayesian}%
  \BibitemOpen
  \bibfield  {author} {\bibinfo {author} {\bibfnamefont {A.}~\bibnamefont
  {Ram{\^o}a}}\ and\ \bibinfo {author} {\bibfnamefont {L.~P.}\ \bibnamefont
  {Santos}},\ }\bibfield  {title} {\bibinfo {title} {Bayesian quantum amplitude
  estimation},\ }\href {https://doi.org/10.22331/q-2025-09-11-1856} {\bibfield
  {journal} {\bibinfo  {journal} {Quantum}\ }\textbf {\bibinfo {volume} {9}},\
  \bibinfo {pages} {1856} (\bibinfo {year} {2025})}\BibitemShut {NoStop}%
\bibitem [{\citenamefont {Li}\ \emph {et~al.}(2026)\citenamefont {Li},
  \citenamefont {Vidwans}, \citenamefont {Wang},\ and\ \citenamefont
  {Soley}}]{li2026biqae}%
  \BibitemOpen
  \bibfield  {author} {\bibinfo {author} {\bibfnamefont {Q.}~\bibnamefont
  {Li}}, \bibinfo {author} {\bibfnamefont {A.}~\bibnamefont {Vidwans}},
  \bibinfo {author} {\bibfnamefont {Y.}~\bibnamefont {Wang}},\ and\ \bibinfo
  {author} {\bibfnamefont {M.~B.}\ \bibnamefont {Soley}},\ }\bibfield  {title}
  {\bibinfo {title} {Harnessing {B}ayesian statistics to accelerate iterative
  quantum amplitude estimation},\ }\href
  {https://doi.org/10.22331/q-2026-01-14-1962} {\bibfield  {journal} {\bibinfo
  {journal} {Quantum}\ }\textbf {\bibinfo {volume} {10}},\ \bibinfo {pages}
  {1962} (\bibinfo {year} {2026})}\BibitemShut {NoStop}%
\bibitem [{\citenamefont {Labib}\ \emph {et~al.}(2024)\citenamefont {Labib},
  \citenamefont {Clader}, \citenamefont {Stamatopoulos},\ and\ \citenamefont
  {Zeng}}]{labib2024csae}%
  \BibitemOpen
  \bibfield  {author} {\bibinfo {author} {\bibfnamefont {F.}~\bibnamefont
  {Labib}}, \bibinfo {author} {\bibfnamefont {B.~D.}\ \bibnamefont {Clader}},
  \bibinfo {author} {\bibfnamefont {N.}~\bibnamefont {Stamatopoulos}},\ and\
  \bibinfo {author} {\bibfnamefont {W.~J.}\ \bibnamefont {Zeng}},\ }\href@noop
  {} {\bibinfo {title} {Quantum amplitude estimation from classical signal
  processing}} (\bibinfo {year} {2024}),\ \Eprint
  {https://arxiv.org/abs/2405.14697} {arXiv:2405.14697 [quant-ph]} \BibitemShut
  {NoStop}%
\bibitem [{\citenamefont {Huang}\ and\ \citenamefont
  {Koczor}(2026)}]{huang2026eigengap}%
  \BibitemOpen
  \bibfield  {author} {\bibinfo {author} {\bibfnamefont {P.-W.}\ \bibnamefont
  {Huang}}\ and\ \bibinfo {author} {\bibfnamefont {B.}~\bibnamefont {Koczor}},\
  }\href@noop {} {\bibinfo {title} {Low-depth amplitude estimation via
  statistical eigengap estimation}} (\bibinfo {year} {2026}),\ \Eprint
  {https://arxiv.org/abs/2603.05475} {arXiv:2603.05475 [quant-ph]} \BibitemShut
  {NoStop}%
\bibitem [{\citenamefont {Erle}\ and\ \citenamefont
  {Koczor}(2026)}]{erle2026anydepth}%
  \BibitemOpen
  \bibfield  {author} {\bibinfo {author} {\bibfnamefont {J.}~\bibnamefont
  {Erle}}\ and\ \bibinfo {author} {\bibfnamefont {B.}~\bibnamefont {Koczor}},\
  }\href@noop {} {\bibinfo {title} {Nearly optimal amplitude estimation at any
  depth}} (\bibinfo {year} {2026}),\ \Eprint {https://arxiv.org/abs/2608.24434}
  {arXiv:2608.24434 [quant-ph]} \BibitemShut {NoStop}%
\bibitem [{\citenamefont {Ziv}\ and\ \citenamefont
  {Zakai}(1969)}]{ziv1969some}%
  \BibitemOpen
  \bibfield  {author} {\bibinfo {author} {\bibfnamefont {J.}~\bibnamefont
  {Ziv}}\ and\ \bibinfo {author} {\bibfnamefont {M.}~\bibnamefont {Zakai}},\
  }\bibfield  {title} {\bibinfo {title} {Some lower bounds on signal parameter
  estimation},\ }\href {https://doi.org/10.1109/TIT.1969.1054301} {\bibfield
  {journal} {\bibinfo  {journal} {IEEE Transactions on Information Theory}\
  }\textbf {\bibinfo {volume} {15}},\ \bibinfo {pages} {386} (\bibinfo {year}
  {1969})}\BibitemShut {NoStop}%
\bibitem [{\citenamefont {Bell}\ \emph {et~al.}(1997)\citenamefont {Bell},
  \citenamefont {Steinberg}, \citenamefont {Ephraim},\ and\ \citenamefont
  {Van~Trees}}]{bell1997extended}%
  \BibitemOpen
  \bibfield  {author} {\bibinfo {author} {\bibfnamefont {K.~L.}\ \bibnamefont
  {Bell}}, \bibinfo {author} {\bibfnamefont {Y.}~\bibnamefont {Steinberg}},
  \bibinfo {author} {\bibfnamefont {Y.}~\bibnamefont {Ephraim}},\ and\ \bibinfo
  {author} {\bibfnamefont {H.~L.}\ \bibnamefont {Van~Trees}},\ }\bibfield
  {title} {\bibinfo {title} {Extended {Ziv--Zakai} lower bound for vector
  parameter estimation},\ }\href {https://doi.org/10.1109/18.556118} {\bibfield
   {journal} {\bibinfo  {journal} {IEEE Transactions on Information Theory}\
  }\textbf {\bibinfo {volume} {43}},\ \bibinfo {pages} {624} (\bibinfo {year}
  {1997})}\BibitemShut {NoStop}%
\bibitem [{\citenamefont {Oshio}\ \emph {et~al.}(2025)\citenamefont {Oshio},
  \citenamefont {Wada},\ and\ \citenamefont {Yamamoto}}]{oshio2025near}%
  \BibitemOpen
  \bibfield  {author} {\bibinfo {author} {\bibfnamefont {K.}~\bibnamefont
  {Oshio}}, \bibinfo {author} {\bibfnamefont {K.}~\bibnamefont {Wada}},\ and\
  \bibinfo {author} {\bibfnamefont {N.}~\bibnamefont {Yamamoto}},\ }\href@noop
  {} {\bibinfo {title} {Near-{H}eisenberg-limited parallel amplitude estimation
  with logarithmic depth circuit}} (\bibinfo {year} {2025}),\ \Eprint
  {https://arxiv.org/abs/2508.06121} {arXiv:2508.06121 [quant-ph]} \BibitemShut
  {NoStop}%
\end{thebibliography}%

\appendix

\section{The exact global maximum-likelihood estimator}\label{app:ml}

Section~\ref{sec:ml} states that the global maximizer of Eq.~\eqref{eq:loglik} can be found exactly by a grid search plus local refinement. This appendix supplies the analytic structure behind that claim, the resolution rule that makes it hold uniformly across the budget range of Sec.~\ref{sec:depth}, the cost accounting, and the numerical check of both. Throughout, $d_j := 2n_j + 1$ is the number of oracle calls in a depth-$n_j$ circuit, $d_L$ its value at the deepest rung, and $k_j \sim \mathrm{Bin}(N_j, p_{n_j}(\theta))$ the observed counts.

\subsection{Analytic structure of the likelihood}\label{app:structure}

The likelihood is
\begin{equation}\label{eq:lik}
    \mathcal{L}(\theta) \;=\; \prod_j p_{n_j}(\theta)^{k_j}\,\big[1 - p_{n_j}(\theta)\big]^{N_j - k_j}.
\end{equation}
Since $p_n(\theta) = \tfrac12\big[1 + \cos(2 d_n \theta)\big]$ and $1 - p_n(\theta) = \tfrac12\big[1 - \cos(2 d_n\theta)\big]$, every factor is a trigonometric polynomial of degree $2d_j$, so $\mathcal{L}$ is a trigonometric polynomial of degree
\begin{equation}\label{eq:degree}
    D \;=\; 2\sum_j N_j d_j \;=\; 4\big(\Ntot - N_1\big) + 2\sum_j N_j \;\approx\; 4\,\Ntot ,
\end{equation}
even in $\theta$ and $\pi$-periodic; $(0, \pi/2)$ is a fundamental domain, which is why the search range needs no further justification.

Two points need care.

First, the \emph{log}-likelihood $\ell = \ln \mathcal{L}$ is not a trigonometric polynomial. Wherever $p_{n_j}$ vanishes with $k_j > 0$ (or $1 - p_{n_j}$ vanishes with $k_j < N_j$), $\ell$ has a logarithmic singularity to $-\infty$. These are downward spikes and never maxima, so they cannot conceal the argmax, but they do preclude uniform Lipschitz or Bernstein-type control of $\ell$; analytic statements are therefore made about $\mathcal{L}$, and the numerical work uses $\ell$ only through values at sampled points.

Second, the degree \eqref{eq:degree} is a poor guide to the required grid. Determining a degree-$D$ trigonometric polynomial from samples needs $2D + 1 \approx 8\,\Ntot$ points per period, whereas the grid used below has $16 d_L \approx 32\, n_L$ points per period; since $\Ntot \approx 10\, n_L$ for the plain ladder, a Nyquist argument falls short by a factor $\approx 2.5$ and cannot justify the search. What controls the grid is not the highest harmonic present in $\mathcal{L}$ but the width of its \emph{modes}, which is far larger.

\subsection{Two length scales: basins and modes}\label{app:scales}

Two scales govern the problem, and the resolution analysis rests on keeping them apart.

The \emph{basin width} $w = \pi/(2 d_L)$ is set by the deepest rung alone: it is the spacing between the rival solutions that the deepest fringe cannot distinguish, and hence the scale on which the coarse grid must not skip a candidate.

The \emph{mode width} is a property of the entire schedule. Near a local maximum the curvature of $\ell$ is the observed information, whose expectation at the truth is the Fisher information of Lemma~\ref{lem:fisher}; a mode therefore has Gaussian standard deviation $\sigma_\theta = \big(\sum_j 4 N_j d_j^2\big)^{-1/2}$. The ratio of the two scales depends only on the ladder shape and shot profile,
\begin{equation}\label{eq:modewidth}
    \frac{w}{\sigma_\theta} \;=\; \pi\sqrt{\Lambda}, \qquad
    \Lambda \;:=\; \sum_j N_j \left(\frac{d_j}{d_L}\right)^{\!2},
\end{equation}
and is independent of $n_L$. For the plain geometric ladder of Sec.~\ref{sec:ladder} ($r = 1.45$, canonical shots) the sum gives $\Lambda \approx 3.68$ and $w/\sigma_\theta \approx 6.03$: a likelihood mode is about one sixth of a basin wide. The flagship, carrying the extra rung, has $\Lambda \approx 5.78$ and $w/\sigma_\theta \approx 7.55$. We use the plain ladder as the running example throughout this appendix, since it is the object Sec.~\ref{sec:ladder} analyzes and none of the conclusions depend on the choice. Under the uniform scaling of Sec.~\ref{sec:depth}, $N_j \to s N_j$ sends $\Lambda \to s\Lambda$, so
\begin{equation}\label{eq:modescale}
    w/\sigma_\theta \;\approx\; 6\sqrt{s}.
\end{equation}
Equation~\eqref{eq:modewidth} is an identity; its content is the underlying claim that a mode of $\ell$ really is $\sigma_\theta$ wide. Measuring the curvature of $\ell$ at its peak confirms this (Table~\ref{tab:grid}, columns 2--3), with one caveat that the grid rule below must respect. At $s = 1$ the observed information fluctuates substantially from trial to trial, so individual modes run both wider and narrower than $\sigma_\theta$. The grid has to accommodate the narrow tail, not the median.

\subsection{The estimator and its resolution rule}\label{app:algorithm}

The estimator is lines 5--11 of Algorithm~\ref{alg:method}; what remains is to fix $\Delta$ and to say why the procedure is correct. The zoom and interpolation exist because the coarse grid is chosen to separate \emph{basins}, not to resolve within one; without them the returned value would be quantized at $\Delta/2$, which becomes the dominant error as soon as $\sigma_\theta$ falls below the coarse step. That is exactly what happens under uniform scaling, and it is the reason the rule for $\Delta$ carries two terms:
\begin{equation}\label{eq:gridrule}
    \Delta \;=\; \min\!\Big(\tfrac{w}{8},\; 4\,\sigma_\theta\Big).
\end{equation}
The first term places $\approx 8$ samples across every basin, so that no rival mode is missed by the coarse stage. The second keeps the coarse step below $4\sigma_\theta$, so that the zoom window $\pm\Delta/2 = \pm 2\sigma_\theta$, where the refinement looks, is guaranteed to straddle the mode the coarse stage selected. The zoom samples $\ell$ at $25$ points across that window, so its spacing is at most $4\sigma_\theta/24 = \sigma_\theta/6$, well inside the quadratic region, and the parabola through the best three points locates the maximum to far below $\sigma_\theta$; the count is a convenience, not a tuned value, and any count above about a dozen gives the same estimates. Both quantities are known before any data are taken ($w$ from the deepest depth, $\sigma_\theta$ from Lemma~\ref{lem:fisher}), so \eqref{eq:gridrule} involves no tuning and no data-dependent branching. By \eqref{eq:modescale} the first term binds while $s \lesssim 30$; the second binds for the more heavily scaled depth-limited designs, where it costs only a proportionally finer grid.

Table~\ref{tab:grid} shows the consequence on the $n_{\max} = 125$ ladder, $2\times10^4$ paired trials per row, against a reference search at $\Delta = w/512$. Rule \eqref{eq:gridrule} holds the error percentile within $0.1\%$ of the reference at every scale, whereas the single-term choice $\Delta = w/8$ drifts once the mode falls well inside the coarse step: $0.3\%$ at $s = 64$, $1.2\%$ at $s = 256$, $4.2\%$ at $s = 1024$. The residual $0.3\%$ at $s = 64$ is the only place where this matters for the numbers reported above---it is the largest scaling in Table~\ref{tab:depth}, and it is an order of magnitude below the two-digit precision quoted there.

\begin{table}[t]
\caption{\label{tab:grid}Mode width and grid resolution on the plain $n_{\max}=125$, $r=1.45$ ladder (Sec.~\ref{sec:ladder}) with all shots scaled by $s$. Columns 2--3: the predicted ratio of basin to mode width, Eq.~\eqref{eq:modewidth}, against the median measured from the curvature of $\ell$ at its peak ($4\times10^3$ trials). Columns 4--5: $\varepsilon_{95}$ relative to a reference search at $\Delta = w/512$, for the two grid rules ($2\times 10^4$ paired trials, identical measurement records across columns).}
\begin{ruledtabular}
\begin{tabular}{rcccc}
 & \multicolumn{2}{c}{$w/(\text{mode width})$} & \multicolumn{2}{c}{$\varepsilon_{95}$ vs.\ reference} \\
$s$ & $\pi\sqrt\Lambda$ & measured & $\Delta = w/8$ & Eq.~\eqref{eq:gridrule} \\
\hline
$1$    & $6.0$   & $5.4$   & $-0.05\%$ & $-0.05\%$ \\
$16$   & $24.1$  & $23.4$  & $-0.06\%$ & $-0.06\%$ \\
$64$   & $48.2$  & $47.8$  & $+0.28\%$ & $+0.04\%$ \\
$256$  & $96.5$  & $96.1$  & $+1.17\%$ & $-0.01\%$ \\
$1024$ & $192.9$ & $192.7$ & $+4.21\%$ & $-0.02\%$ \\
\end{tabular}
\end{ruledtabular}
\end{table}

Finally, the reason one fixed rule suffices at every budget. The coarse stage can fail only by returning a point in the wrong basin, and only if the likelihood deficit it incurs by sampling the true mode off-center exceeds the margin by which the true basin beats its rivals. The deficit is bounded by the local quadratic expansion---the nearest grid point lies within $\Delta/2$ of the mode center---giving
\begin{equation}\label{eq:deficit}
    \delta_{\mathrm{grid}} \;=\; \frac{(\Delta/2)^2}{2\sigma_\theta^2} \;=\; \frac{\pi^2 \Lambda}{512}
\end{equation}
nats when the first term of \eqref{eq:gridrule} binds, against the rival-rejection exponents $E$ of Eq.~\eqref{eq:chernoff}. Both $\delta_{\mathrm{grid}}$ and $E$ are linear in the shot counts, so the margin $E/\delta_{\mathrm{grid}}$ is \emph{invariant under uniform scaling}: a rule verified at one budget remains valid at every other. This is a scale argument rather than a theorem, since $\delta_{\mathrm{grid}}$ holds only in the quadratic regime and $E$ is itself a bound, so we check it directly against a $32\times$ finer coarse search on identical records. Over three ladders at $4\times10^4$ trials each, the two searches disagree about the basin on $5$--$7 \times 10^{-4}$ of trials. On every one of those the two candidates lie within $0.07$ nats of each other: these are statistical near-ties in which the data do not prefer either basin, not cases where the coarse grid missed a clear maximum. The error percentiles correspondingly agree to within $0.05\%$, far inside the bootstrap intervals of Table~\ref{tab:scale}.

\subsection{Cost}\label{app:cost}

With $\Delta = w/8$ the grid holds $G = (\pi/2)/\Delta = 8 d_L \approx 16\, n_L$ points, and the coarse stage costs two $G \times L$ matrix--vector products per estimate, or $\approx 4 G L \approx 64\, L\, n_L$ floating-point operations; the zoom and interpolation stages add $\mathcal{O}(L)$ and are negligible. This is the $\mathcal{O}(L\, n_L)$ quoted in Sec.~\ref{sec:ml}. Measured single-machine timings (4 BLAS threads) are $0.02$~ms per estimate at $n_{\max} = 125$, $0.10$~ms at $n_{\max} = 1688$ (the deepest row of Table~\ref{tab:scale}), and $0.77$~ms at $n_{\max} = 2\times10^4$; extrapolating the same accounting to the deepest schedule of Fig.~\ref{fig:scaling} ($n_{\max} \approx 2.1\times10^5$, $L = 35$) gives $\approx 5\times10^8$ operations and $\approx 10$~ms per estimate. The $(G \times L)$ probability matrix is built once per schedule and reused across all trials; trials are processed in chunks so that the $(\text{chunk} \times G)$ intermediate fits in memory, which is the only implementation constraint at large $n_L$.

For comparison, the ESPRIT pipeline it replaces requires, per estimate, the construction of a virtual array, a covariance estimate, an eigendecomposition, and a search over sign assignments---asymptotically and practically more expensive, and with failure modes (rank selection, sign errors) of its own.

\subsection{Noise, and the meaning of ``exact''}\label{app:caveats}

Because the estimator consumes only the model probabilities $p_{n_j}(\theta)$, any response function can be substituted without touching the algorithm: for the depolarizing model of Sec.~\ref{sec:noise} one replaces $P_{gj}$ by $V_j \cos^2(d_j\theta_g) + (1 - V_j)/2$ with $V_j = (1-\eta)^{d_j}$, and every other line of Algorithm~\ref{alg:method} is unchanged. The mode-width identity \eqref{eq:modewidth} generalizes with $4N_jd_j^2$ replaced by the noisy Fisher information $4N_jV_j^2 d_j^2\sin^2(2d_j\theta)/(1 - V_j^2\cos^2(2d_j\theta))$, which is no longer $\theta$-independent; the grid rule \eqref{eq:gridrule} should then use its minimum over $\theta$, which is conservative and, for the visibilities of Table~\ref{tab:noise}, changes nothing.

Finally, ``exact'' refers to the optimization, not to optimality of the estimator. The likelihood \eqref{eq:lik} is the exact finite-shot sampling distribution at any shot count and $\hat\theta$ its global maximizer to a precision far below the statistical error; what is asymptotic is only the attainment of the \ac{CRLB}. At a single shot on the deepest rung no efficiency guarantee is available, which is why every comparison against the \ac{CRLB} here is empirical rather than assumed.

\section{The Ziv--Zakai bound}\label{app:zzb}

Section~\ref{sec:zzb} uses Eq.~\eqref{eq:zzb} to bound how far the estimator can be from optimal and quotes the resulting numbers. This appendix derives the bound, isolates the single inequality it rests on, explains why one computation yields every percentile, and records how $P_{\min}$ is evaluated. Throughout, $p(\cdot)$ is the prior density of the amplitude, uniform on $(a_-, a_+) = (0.1, 0.9)$, and, within this appendix only, $\varepsilon = \hat a - a$ denotes the signed estimation error of an arbitrary estimator, over the joint law of amplitude and data.

\subsection{From estimation to binary testing}\label{app:zzb:derive}

Fix a separation $g > 0$ and a location $a$, and consider deciding between
\begin{equation*}
    H_0: A = a \qquad\text{versus}\qquad H_1: A = a + g
\end{equation*}
from one run of the schedule, with the two hypotheses weighted as the prior weights them, $p(a)$ against $p(a{+}g)$. Any estimator induces a decision rule for this test---report $H_1$ when $\hat a \ge a + g/2$ and $H_0$ otherwise---whose error probability is
\begin{align}\label{eq:induced}
    P_e \;=\; \frac{1}{p(a) + p(a{+}g)}\Big\{\,& p(a)\,\Pr\big[\hat a \ge a + \tfrac{g}{2} \,\big|\, a\big] \notag\\
    &+\; p(a{+}g)\,\Pr\big[\hat a < a + \tfrac{g}{2} \,\big|\, a{+}g\big]\Big\} .
\end{align}
No test can do better than the likelihood-ratio test, so $P_e \ge P_{\min}(a, a{+}g)$. Clearing the denominator and integrating over $a$,
\begin{align}
    &\int \! p(a) \Pr\big[\varepsilon \ge \tfrac{g}{2} \big| a\big] da \;+ \int \! p(a{+}g) \Pr\big[\varepsilon < -\tfrac{g}{2} \big| a{+}g\big] da \notag \\
    &\qquad \ge\; \int \big[p(a) + p(a{+}g)\big] P_{\min}(a, a{+}g)\, da .
\end{align}
Substituting $u = a + g$ in the second integral turns the left-hand side into $\Pr[\varepsilon \ge g/2] + \Pr[\varepsilon < -g/2]$. These are disjoint events contained in $\{|\varepsilon| \ge g/2\}$, so the left-hand side is at most $\Pr[|\varepsilon| \ge g/2]$, which gives Eq.~\eqref{eq:zzb}.

Two features of this derivation make the bound useful here.

First, it contains exactly one inequality, $P_e \ge P_{\min}$, and no regularity conditions whatever: no unbiasedness, no differentiability of the likelihood in $a$, no asymptotics, no assumption that the estimate lies in any particular basin. It therefore applies to the exact global \ac{ML} estimate at single-digit shot counts, which is the regime where the \ac{CRLB} is no longer guaranteed to be attainable.

Second, the two one-sided tails must be collected separately, as above. Bounding each of them by the two-sided $\Pr[|\varepsilon| \ge g/2]$ before integrating is the natural first move and costs a factor of two in the bound---enough, at the numbers of Table~\ref{tab:zzb}, to turn a certificate into a triviality.

The looseness that remains has two sources. The rule induced by $\hat a$ in Eq.~\eqref{eq:induced} need not be a good test, so $P_{\min}$ may sit well below $P_e$; and the test is \emph{pairwise}, weighing $a$ against one rival at a time, whereas in the tail many basins compete at once. The second is the more serious, and is the caveat attached to the $99\%$ row in Sec.~\ref{sec:zzb}.

\subsection{Valley-filling and percentiles}\label{app:zzb:valley}

Write $A(g)$ for the right-hand side of Eq.~\eqref{eq:zzb}. Because $|\varepsilon| \ge g'/2$ implies $|\varepsilon| \ge g/2$ whenever $g' \ge g$, every $A(g')$ with $g' \ge g$ is also a lower bound at $g$, so the bound may be strengthened to its non-increasing envelope
\begin{equation}\label{eq:valleyfill}
    \Pr\big[|\varepsilon| \ge g/2\big] \;\ge\; \bar A(g) \;:=\; \sup_{g' \ge g} A(g') .
\end{equation}
This step is known in the Ziv--Zakai literature as \emph{valley-filling}, and it matters in general. For these schedules $A$ is non-monotone: it develops local maxima several basins out, where the deep fringes realign and the two hypotheses become hard to separate again. These are the same realignments that Eq.~\eqref{eq:chernoff} prices in the aliasing analysis. That non-monotonicity is the bound registering ambiguity (a rival several basins away is nearly as hard to reject as one inside the mode), and a local bound has no counterpart to it.

For a percentile, however, the envelope is redundant. Writing the confidence level $\delta$ as a fraction in this subsection, if $\bar A(g) \ge 1 - \delta$ then $|\varepsilon|$ exceeds $g/2$ with probability at least $1-\delta$, so the $\delta$th percentile satisfies
\begin{equation}\label{eq:zzbpct}
    \varepsilon_\delta \;\ge\; \tfrac12 \sup\big\{\, g : \bar A(g) \ge 1 - \delta \,\big\} ;
\end{equation}
but $\bar A(g) \ge t$ holds precisely when $A(g') \ge t$ for some $g' \ge g$, so $\{g : \bar A(g) \ge t\}$ and $\{g : A(g) \ge t\}$ have the same supremum. Taking the largest crossing of $A$ itself already performs the valley-filling. The distinction does matter in the more familiar mean-square form of the bound, $\mathbb{E}[\varepsilon^2] \ge \tfrac12\int_0^\infty g\,\bar A(g)\,dg$, where the envelope raises the integrand pointwise and cannot be dropped; for a threshold crossing it can. We apply the running maximum regardless, and it leaves every entry of Table~\ref{tab:zzb} unchanged.

All three rows of Table~\ref{tab:zzb} are thus one curve $A(g)$ read at $0.32$, $0.05$, and $0.01$: the bound constrains the entire survival function, and a confidence level is a choice of where to cross it.

Unlike the \ac{CRLB}, Eq.~\eqref{eq:valleyfill} bounds the survival function itself and needs no shape assumption such as the within-basin normality behind the coefficients $\kappa_\delta$ of Sec.~\ref{sec:setting}. For the same reason no Jacobian conversion arises: the bound is stated for $|\hat a - a|$ in the amplitude parametrization from the outset, whereas the \ac{CRLB} must be carried across $a = \sin\theta$ with the care described in Sec.~\ref{sec:setting}.

One property is inherited from the Bayesian formulation and should be read into the numbers: Eq.~\eqref{eq:zzb} bounds the prior-averaged error, so it bounds performance over the ensemble $a \sim \mathcal{U}(0.1, 0.9)$ rather than uniformly in $a$. The achieved constants are averages over the same ensemble, so the comparison in Table~\ref{tab:zzb} is between like quantities.

\subsection{\texorpdfstring{Evaluating $P_{\min}$}{Evaluating Pmin}}\label{app:zzb:pmin}

The normalization in Eq.~\eqref{eq:induced} cancels against the prefactor in Eq.~\eqref{eq:zzb}: writing $p_0 = p(a)$, $p_1 = p(a{+}g)$ and $P, Q$ for the two data distributions,
\begin{equation}\label{eq:errmass}
    \big[p_0 + p_1\big] P_{\min} \;=\; \sum_x \min\big\{p_0 P(x),\, p_1 Q(x)\big\} ,
\end{equation}
the unnormalized error mass. The sum is over $x = (k_1, \dots, k_L)$, so it has $\prod_j (N_j + 1)$ terms---between $10^{11}$ and $10^{22}$ for the ladders of Table~\ref{tab:scale}---and cannot be evaluated term by term. Factoring out $P$ instead,
\begin{equation}\label{eq:pminmc}
    \sum_x \min\big\{p_0 P, p_1 Q\big\} \;=\; p_0\, \mathbb{E}_{x \sim P}\Big[\min\big(1, \tfrac{p_1}{p_0} e^{\Delta(x)}\big)\Big],
\end{equation}
with the log-likelihood ratio (the symbol $\Delta$ is unrelated to the grid spacing of Appendix~\ref{app:ml})
\begin{equation}
    \Delta(x) = \sum_j \Big[ k_j \ln \frac{q_j}{p_j} + (N_j - k_j) \ln \frac{1-q_j}{1-p_j} \Big],
\end{equation}
$p_j = p_{n_j}(\arcsin a)$ and $q_j = p_{n_j}(\arcsin(a{+}g))$. The expectation is over independent binomials, which are sampled exactly, and its integrand lies in $[0,1]$; a plain Monte Carlo average therefore estimates Eq.~\eqref{eq:errmass} with no inequality anywhere and variance at most $1/4$ per sample.

The Monte Carlo evaluation is a necessity, not a refinement. The standard treatment bounds $P_{\min}$ through the Bhattacharyya coefficient $\rho = \prod_j \mathrm{BC}_j^{N_j} = e^{-E}$, the same quantity as in Eq.~\eqref{eq:chernoff}, giving
\begin{equation}\label{eq:bcbound}
    \big[p_0 + p_1\big] P_{\min} \;\ge\; \tfrac12\Big[(p_0{+}p_1) - \sqrt{(p_0{+}p_1)^2 - 4 p_0 p_1 \rho^2}\Big] ,
\end{equation}
which for $p_0 = p_1 = p$ reduces to $p\big(1 - \sqrt{1-\rho^2}\big) \approx p \rho^2/2$ at small $\rho$---the Bhattacharyya exponent charged twice, $e^{-2E}$ in place of $e^{-E}$. Against the companion upper bound $p\rho$, Eq.~\eqref{eq:bcbound} can thus fall short by as much as a factor $\rho/2 = e^{-E}/2$, and does so precisely where the exponent is large, which is most of the range of $g$. Carried through, it yields $\varepsilon_{95}\Ntot \ge 2.02$ for the flagship at $n_{\max} = 125$, below that schedule's \ac{CRLB} of $2.36$ and therefore of no use. This is the same weakness noted for the union--Chernoff estimate in Sec.~\ref{sec:ladder}, which is why we evaluate $P_{\min}$ by Monte Carlo.

\subsection{Numerical protocol}\label{app:zzb:protocol}

Separations are taken on a geometric grid $g \in [10^{-5}, 0.2]$ and amplitudes on a uniform grid over $[a_-, a_+ - g]$ with trapezoidal quadrature. Eq.~\eqref{eq:pminmc} is averaged over datasets drawn under $H_0$ at each amplitude. Then $\bar A$ is formed by a running maximum from the largest separation downward, and Eq.~\eqref{eq:zzbpct} is read off as the largest grid point of the resulting monotone curve at which the threshold is met. Table~\ref{tab:zzb} uses $520$ separations, $600$ amplitudes, and $2\times 10^3$ datasets per amplitude. Coarsening each grid by a factor of two and the Monte Carlo sample by a factor of five moves the tabulated bounds by up to $3\%$, with a seed-to-seed spread of the same order at that coarser setting. The entries are therefore converged to the order of a percent. This is small against the gaps they are used to establish but not negligible, so they should be read to that accuracy rather than to the digits shown.

\end{document}